\let\arxivVersion=\relax
\title[Cuckoo Hashing Thresholds with Overlapping Blocks]{Load Thresholds for Cuckoo Hashing with~Overlapping~Blocks}
\thanks{A preliminary version of this work appeared in the Proceedings of the 45th International Colloquium on
Automata, Languages, and Programming (ICALP), 102:1--102:10, 2018.}
\author{Stefan Walzer}
\affiliation{%
  \institution{Technische Universit{\"a}t Ilmenau}
  \streetaddress{Max-Planck-Ring 14}
  \city{Ilmenau}
  \state{Th{\"u}ringen}
  \postcode{98693}
  \country{Germany}}
\email{stefan.walzer@tu-ilmenau.de}
\keywords{Cuckoo Hashing, Unaligned Blocks, Hypergraph Orientability, Load Thresholds, Randomised Algorithms}
\tikzstyle{bucket}=[draw=black,fill=black!20]
\tikzstyle{window}=[inner sep=1,diamond,very thick,draw=blue,fill=white]
\tikzstyle{divider}=[circle,inner sep=1,draw=black,very thick,fill=black]
\tikzstyle{dividing line}=[draw,red!50,very thick]
\tikzstyle{object}=[draw=black,very thick,fill=white,circle,inner sep=1]
\tikzstyle{weight}=[font=\footnotesize]
\tikzstyle{separator}=[draw,thick,circle,every node/.style={draw,very thick,fill=gray!50,inner sep=1}]
\tikzstyle{vertex}=[fill,circle,inner sep=1]
\def\Idw{I_{\scalebox{0.5}{\divider} → \scalebox{0.5}{\window}}}
\def\Iwd{I_{\scalebox{0.5}{\window} → \scalebox{0.5}{\divider}}}
\def\Iow{I_{\scalebox{0.5}{\object} → \scalebox{0.5}{\window}}}
\def\Iwo{I_{\scalebox{0.5}{\window} → \scalebox{0.5}{\object}}}
\def\divider{\tikz[scale=1.5] \node[divider] {};}
\def\window{\tikz[scale=1.5] \node[window] {};}
\def\object{\tikz[scale=1.5] \node[object] {};}
\theoremstyle{plain}
\newtheorem{proposition}{Proposition}
\newtheorem*{mainthm}{Main Theorem}
\newtheorem{conjecture}{Conjecture}
\crefname{conjecture}{Conjecture}{Conjectures}
\newcommand{\whp}{\textrm{whp}\xspace}
\newcommand{\AS}{\xspace\text{almost surely}\xspace}
\newcommand{\ASshort}{\xspace\text{a.s.}\xspace}
\DeclareMathOperator{\Po}{Po}
\DeclareMathOperator{\Bin}{Bin}
\DeclareMathOperator{\E}{\mathbb{E}}
\def\W{W_{n,cn}^{k,ℓ}}
\def\B{B_{n,cn}^{k,ℓ}}
\def\hut{\expandafter\hat}
\def\w{\mathfrak{w}}
\def\gap{\mathrm{gap}}
\def\brack#1#2{[\!\begin{smallmatrix} #1 \\ #2 \end{smallmatrix}\!]}
\begin{abstract}
We consider a natural variation of cuckoo hashing proposed by Lehman and Panigrahy (2009). Each of $cn$ objects is assigned $k = 2$ intervals of size $ℓ$ in a linear hash table of size $n$ and both starting points are chosen independently and uniformly at random. Each object must be placed into a table cell within its intervals, but each cell can only hold one object.
Experiments suggested that this scheme outperforms the variant with \emph{blocks} in which intervals are aligned at multiples of $ℓ$. In particular, the \emph{load threshold} is higher, i.e.\ the load $c$ that can be achieved with high probability. For instance, Lehman and Panigrahy (2009) empirically observed the threshold for $ℓ = 2$ to be around $96.5\%$ as compared to roughly $89.7\%$ using blocks. They pinned down the asymptotics of the thresholds for large $ℓ$, but the precise values resisted rigorous analysis.

We establish a method to determine these load thresholds for all $ℓ ≥ 2$, and, in fact, for general $k ≥ 2$. For instance, for $k = ℓ = 2$ we get $≈ 96.4995\%$.
We employ a theorem due to Leconte, Lelarge, and Massoulié (2013), which adapts methods from statistical physics to the world of hypergraph orientability. In effect, the orientability thresholds for our graph families are determined by belief propagation equations for certain graph limits. As a side note we provide experimental evidence suggesting that placements can be constructed in linear time using an adapted version of an algorithm by Khosla (2013).
\end{abstract}
\begin{document}
\maketitle

\section{Introduction}

In \textbf{standard cuckoo hashing} \cite{PR:Cuckoo:2004}, a set $X = \{x₁,…,x_{cn}\}$ of objects (possibly with associated data) from a universe $\mathcal{U}$ is to be stored in a hash table indexed by $V = \{0,…,n-1\}$ of size $n$ such that each object $x_i$ resides in one of two associated memory locations $h₁(x_i),h₂(x_i)$, given by hash functions $h₁,h₂ : \mathcal{U} → V$. In most theoretic works, these functions are 
modelled as fully random functions, selected uniformly and independently from~$V^{\mathcal{U}}$.

The load parameter $c ∈ [0,1]$ indicates the desired space efficiency, i.e.\ the ratio between objects and allocated table positions. Whether or not a valid placement of the objects in the table exists is well predicted by whether $c$ is above or below the \emph{threshold} $c^* = \frac 12$: If $c ≤ c^* - ε$ for arbitrary $ε > 0$, then a placement exists with high probability (\whp), i.e.\ with probability approaching $1$ as $n$ tends to infinity, and if $c ≥ c^* + ε$ for $ε >0$, then no placement exists \whp.

If a placement is found, we obtain a dictionary data structure representing $X ⊆ \mathcal{U}$. To check whether an object $x ∈ U$ resides in the dictionary (and possibly retrieve associated data), only the memory locations $h₁(x)$ and $h₂(x)$ need to be computed and searched for $x$. Combined with results facilitating swift creation, insertion and deletion, standard cuckoo hashing has decent performance when compared to other hashing schemes at load factors around $\frac 13$ \cite{PR:Cuckoo:2004}.

Several generalisations have been studied that allow trading rigidity of the data structure---and therefore performance of lookup operations---for load thresholds closer to $1$.

\begin{itemize}
• In \textbf{$\bm{k}$-ary cuckoo hashing}, due to Fotakis et al.~\cite{FPSS:Space_Efficient:2005}, a general number $k ≥ 2$ of hash functions is used.
•  Dietzfelbinger and Weidling \cite{DW07:Balanced:2007} propose partitioning the table into $\frac n{ℓ}$ contiguous \textbf{blocks of size $\bm{ℓ}$} 
and assign two random blocks to each object via the two hash functions, allowing an object to reside anywhere within those blocks.
• By \textbf{windows of size $\bm{ℓ}$} we mean the related idea---proposed in Lehman and Panigrahy \cite{LP:3.5-Way:2009} and the appendix in \cite{DW07:Balanced:2007}---where $x$ may reside anywhere in the intervals $[h₁(x),h₁(x)+ℓ)$ and $[h₂(x),h₂(x)+ℓ)$ (all indices understood modulo $n$). Compared to the block variant, the values $h₁(x),h₂(x) ∈ V$ need not be multiples of $ℓ$, so the possible intervals do not form a partition of $V$.
\end{itemize}
The overall performance of a cuckoo hashing scheme is a story of multidimensional trade-offs and hardware dependencies, but based on experiments in \cite{DW07:Balanced:2007,LP:3.5-Way:2009} and roughly speaking, the following empirical claims can be made:
\begin{itemize}
    • $k$-ary cuckoo hashing for $k > 2$ is slower than the other two approaches. This is because lookup operations trigger up to $k$ evaluations of hash functions and $k$ random memory accesses, each likely to result in a cache fault. In the other cases, only the number of key comparisons rises, which are comparatively cheap.
    • Windows of size $ℓ$ offer a better trade-off between worst-case lookup times and space efficiency than blocks of size $ℓ$.
\end{itemize}
Although our results are oblivious of hardware effects, they support the second empirical observation from a mathematical perspective.


\subsection{Previous Work on Thresholds}

Precise thresholds are known for
$k$-ary cuckoo hashing \cite{DGMMPR:Tight:2010,FM:Maximum:2012,FP:Orientability:2010},
cuckoo hashing with blocks of size $ℓ$ \cite{FR:The_k-orientability:2007,CSW:The_Random:2007},
and the combination of both, i.e.\ $k$-ary cuckoo hashing with blocks of size $ℓ$ with $k ≥ 3, ℓ ≥ 2$ \cite{FKP:The_Multiple:2011}. 
The techniques in the cited papers are remarkably heterogeneous and often specific to the cases at hand. Lelarge \cite{L:A_New_Approach:2012} managed to unify the above results using techniques from statistical physics that, perhaps surprisingly, feel like they grasp more directly at the core phenomena. Generalising further, Leconte, Lelarge, and Massoulié \cite{L:Belief_Propagation:2013} solved the case where each object must occupy $j ∈ ℕ$ incident table positions, $r ∈ ℕ$ of which may lie in the same block (see also \cite{GW:Load_Balancing:2010}).

Lehman and Panigrahy \cite{LP:3.5-Way:2009} showed that, asymptotically, the load threshold is $1-(2/e + o_{ℓ}(1))^{ℓ}$ for cuckoo hashing with blocks of size $ℓ$ and $1-(1/e + o_{ℓ}(1))^{1.59ℓ}$ in the case of windows, with no implication for small constant $ℓ$.  Beyer \cite{B:Analysis:2012} showed in his master's thesis that for $ℓ = 2$ the threshold is at least $0.829$ and at most $0.981$. To our knowledge, this is an exhaustive list of published work concerning windows.

In a spirit similar to cuckoo hashing with windows, Porat and Shalem \cite{PS:A_Cuckoo_Hashing:2012} analyse a scheme where memory is partitioned into pages and a bucket of size $k$ is a set of $k$ memory positions from the same page (not necessarily contiguous). The authors provide rigorous lower bounds on the corresponding thresholds as well as empirical results.




\subsection{Our Contribution}

We provide precise thresholds for $k$-ary cuckoo hashing with windows of size $ℓ$ for all $k,ℓ ≥ 2$. In particular this solves the case of $k = 2$ left open in \cite{DW07:Balanced:2007,LP:3.5-Way:2009}. Note the pronounced improvements in space efficiency when using windows over blocks, for instance in the case of $k = ℓ = 2$, where the threshold is at roughly $96.5\%$ instead of roughly $89.7\%$.

Formally, for any $k,ℓ ≥ 2$, we identify real analytic functions 
$f_{k,ℓ}$, $g_{k,ℓ}$ (see \cref{sec:messagePassing}), such that for $γ_{k,ℓ} = \inf_{λ > 0}\{f_{k,ℓ}(λ) \mid g_{k,ℓ}(λ) < 0\}$ we have

\begin{mainthm}
    The threshold for $k$-ary cuckoo hashing with windows of size $ℓ$ is $γ_{k,ℓ}$, in particular for any $ε > 0$,
    \begin{enumerate}[(i)]
        \item if $c > γ_{k,ℓ} + ε$, then no valid placement of objects exists \whp and
        \item if $c < γ_{k,ℓ} - ε$, then \,a\, valid placement of objects exists \whp.
    \end{enumerate}
\end{mainthm}
While $f_{k,ℓ}$ and $g_{k,ℓ}$ are very unwieldy, with ever more terms as $ℓ$ increases, numerical approximations of $γ_{k,ℓ}$ can be attained with mathematics software, as explained in \cref{sec:approximations}. We provide some values in \cref{table:values}.

\subsection{Methods}

The obvious methods to model cuckoo hashing with windows either give probabilistic structures with awkward dependencies or the question to answer for the structure follows awkward rules. Our first non-trivial step is to transform a preliminary representation into a hypergraph with $n$ vertices, $cn$ uniformly random hyperedges of size $k$, an added deterministic cycle, and a question strictly about the orientability of this hypergraph.

In the new form, the problem is approachable by a combination of belief propagation methods and the objective method \cite{AS:Objective_Method:2004}, adapted to the world of hypergraph orientability by Lelarge \cite{L:A_New_Approach:2012} in his insightful paper. These results were further strengthened by a Theorem in \cite{L:Belief_Propagation:2013}, which we apply at a critical point in our argument.

As the method is fundamentally about approximate sizes of incomplete orientations, it leaves open the possibility of $o(n)$ unplaced objects; a gap that can be closed in an afterthought with standard methods.

A comprehensive overview of our argument is given in \cref{sec:outline}. Key propositions are found in \cref{sec:eliminate-cycles,sec:local-weak-convergence,sec:messagePassing,sec:mainproof}.

\begin{table}[htpb]
    \def\g{\color{black!70}}
    \small
    \centering
    Thresholds $c_{k,ℓ}$ for $k$-ary cuckoo hashing with blocks of size $ℓ$:\\
    \renewcommand{\tabcolsep}{0.15cm}
    \begin{tabular}{cccccccc}
        \toprule 
        $ℓ$\textbackslash\raisebox{1.5pt}{$k$} & 2 & 3 & 4 & 5 & 6 & 7\\
        \midrule
        1 & 0.5 & 0.9179352767 & 0.9767701649 & 0.9924383913 &  0.9973795528 & 0.9990637588\\
        2 & 0.8970118682 & 0.9882014140 & 0.9982414840 & 0.9997243601 & 0.9999568737 & 0.9999933439\\
        3 & 0.9591542686 & 0.9972857393 & 0.9997951434 & 0.9999851453 & 0.9999989795 & 0.9999999329\\
        4 & 0.9803697743 & 0.9992531564 & 0.9999720661 & 0.9999990737 & 0.9999999721 & 0.9999999992\\        
        \bottomrule
    \end{tabular}\\[15pt]
    Thresholds $γ_{k,ℓ}$ for $k$-ary cuckoo hashing with windows of size $ℓ$:\\
    \begin{tabular}{cccccccc}
        \toprule 
        $ℓ$\textbackslash\raisebox{1.5pt}{$k$} & 2 & 3 & 4 & 5 & 6 & 7\\
        \midrule
        1 & 0.5 & 0.9179352767 & 0.9767701649 & 0.9924383913 &  0.9973795528 & 0.9990637588\\
        2 & 0.9649949234 & 0.9968991072 & 0.9996335076 & 0.9999529036 & 0.9999937602 & 0.9999991631\\
        3 & 0.9944227538 & 0.9998255112 & 0.9999928198 & 0.9999996722 & 0.9999999843 & 0.9999999992\\
        4 & 0.9989515932 & 0.9999896830 & 0.9999998577 & 0.9999999977 & $≈$ 1 & $≈$ 1\\
        \bottomrule
    \end{tabular}
    \caption[fragile]{Some thresholds $c_{k,ℓ}$ as obtained by \cite{PR:Cuckoo:2004,CSW:The_Random:2007,FR:The_k-orientability:2007,DGMMPR:Tight:2010,FM:Maximum:2012,FP:Sharp:2012,FKP:The_Multiple:2011}
    and values of $γ_{k,ℓ}$ as obtained from our main theorem, rounded to 10 decimal places.\\In both tables, the line for $ℓ = 1$ corresponds to plain $k$-ary cuckoo hashing, reproduced here for comparison.
    }
    \label{table:values}
    \label{table:bucketvalues}
\end{table}

\subsection{Further Discussion}

At the end of this paper, we touch on three further issues that complement our results but are somewhat detached from our main theorem.
\begin{description}
    •[Numerical approximations of the thresholds.] In \cref{sec:approximations} we explain how mathematics software can be used to get approximations for the values $γ_{k,ℓ}$, which have been characterised only implicitly.
    •[Speed of convergence.] In \cref{sec:exp-speed-of-convergence} we provide experimental results with finite table sizes to demonstrate how quickly the threshold behaviour emerges.
    •[Constructing orientations] In \cref{sec:khosla-lsa} we examine the LSA algorithm by Khosla for insertion of elements, adapted to our hashing scheme. Experiments suggest an expected constant running time per element as long as the load is bounded away from the threshold, i.e.\ $c < γ_{k,ℓ} - ε$ for some $ε > 0$.
\end{description}


\section{Definitions and Notation}
\label{sec:definitions}

A cuckoo hashing scheme specifies for each object $x ∈ X$ a set $A_x ⊂ V$ of table positions that $x$ may be placed in. For our purposes, we may identify $x$ with $A_x$. In this sense, $H = (V,X)$ is a hypergraph, where table positions are vertices and objects are hyperedges. 
The task of placing objects into admissible table positions corresponds to finding an \emph{orientation} of $H$, which assigns each edge $x ∈ X$ to an incident vertex $v ∈ x$ such that no vertex has more than one edge assigned to it. If such an orientation exists, $H$ is \emph{orientable}.

We now restate the hashing schemes from the introduction in this hypergraph framework, switching to letters $e$ (and $E$) to refer to (sets of) edges. We depart in notation, but not in substance, from definitions given previously, e.g. \cite{FPSS:Space_Efficient:2005,DW05:Balanced:2005,LP:3.5-Way:2009}. Illustrations are available in \cref{fig:k-ary-blocks-windows}.

Concerning \textbf{$\bm{k}$-ary cuckoo hashing} the hypergraph is given as:
\begin{equation}
    H_n = H_{n,cn}^k := (ℤ_n, E = \{ e₁,e₂,…,e_{cn}\}), \quad \text{for } e_i ← \brack{ℤ_n}{k},
\end{equation}
where $ℤ_n = \{0,1,…,n-1\}$ and for a set $S$ and $k ∈ ℕ$ we write $e ← \brack{S}{k}$ to indicate that $e = \{s₁,…,s_k\}$ is obtained by picking $s₁,…,s_k$ independently and uniformly at random from $S$.

There is a subtle difference to picking $e$ uniformly at random from $\binom{S}{k}$, the set of all $k$-subsets of $S$, as the elements $s₁,…,s_k$ need not be distinct. We therefore understand $e$ as a multiset. Also, we may have $e_i = e_j$ for $i ≠ j$, so $E$ is a multiset as well.\footnote{While our choice for the probability space is adequate for cuckoo hashing and convenient in the proof, such details are inconsequential. Choosing $H_n$ uniformly from the set of all hypergraphs with $cn$ \emph{distinct} edges all of which contain $k$ \emph{distinct} vertices would be equivalent for our purposes.}

\begin{figure}[htbp]
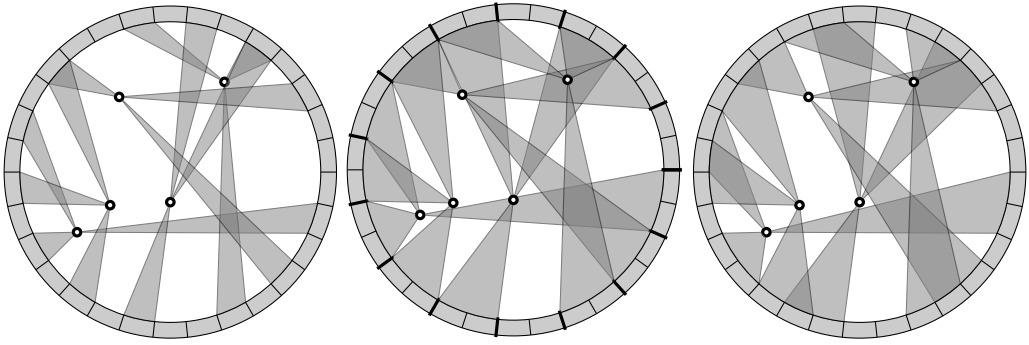

    \includegraphics[page=1]{img/indep-blocks-windows.pdf}\ 
    \includegraphics[page=2]{img/indep-blocks-windows.pdf}\ 
    \includegraphics[page=3]{img/indep-blocks-windows.pdf}
    \Description{Three rings subdivided into 30 segments each. In the second ring, the segments come in 15 blocks of 2 segments each. There are five dots within each ring. Each dot is connected to three segments / three blocks of two segments / three pairs of neighbouring segments, respectively.}
    \caption{Drawing of possible outcomes for the hypergraphs $H_n$, $B_n$ and $W_n$ (modelling $k$-ary cuckoo hashing plain / with\,blocks / with\,windows) for $n = 30$, $c = \frac 16$, $k = 3$ and $ℓ = 2$ ($ℓ$ only for $B$ and $W$). Each edge is drawn as a point and connected to all incident table cells, which are arranged in a circle. In the case of $B$, thick lines indicate the borders between blocks.}
    \label{fig:k-ary-blocks-windows}
\end{figure}

Assuming the table size $n$ is a multiple of $ℓ$, \textbf{$\bm{k}$-ary cuckoo hashing with blocks of size $\bm{ℓ}$} is modelled by the hypergraph

\begin{equation}
    B_n = \B := (ℤ_n, \{e'₁,e'₂,…,e'_{cn}\}), \ \  \text{where } e'_i = \bigcup_{j ∈ e_i} [jℓ,(j+1)ℓ) \text{ and } e_i ← \brack{ℤ_{n/ℓ}}{k},
    \label{eq:blockgraph}
\end{equation}
that is, each hyperedge is the union of $k$ blocks chosen uniformly at random from the set of all blocks, which are the $n/ℓ$ intervals of size $ℓ$ in $ℤ_n$ that start at a multiple of $ℓ$.
Note that for $ℓ = 1$ we recover $H_n$. 

Similarly, \textbf{$\bm{k}$-ary cuckoo hashing with windows of size $\bm{ℓ}$} is modelled by
\begin{equation}
    W_n = \W := (ℤ_n, \{e'₁,e'₂,…,e'_{cn}\}), \ \  \text{where } e'_i = \bigcup_{j ∈ e_i} [j,j+ℓ) \text{ and } e_i ← \brack{ℤ_n}{k},
    \label{eq:defW}
\end{equation}
that is, each hyperedge is the union of $k$ windows chosen uniformly at random from the set of all windows, which are the $n$ intervals of size $ℓ$ in $ℤ_n$, this time without alignment restriction. Note that intervals wrap around at the ends of the set $\{0,…,n-1\}$ with no awkward “border intervals”.
Again, for $ℓ = 1$ we recover $H_n$.


\section{Outline of the Proof}
\label{sec:outline}

\def\substep#1{\paragraph{\textbf{#1}}}
\def\dir{\mathrel{\tikz[baseline] \draw[->,>=Latex] (0,3pt) -- ++(12pt,0);}}
\def\hdir{\mathrel{\tikz[baseline] \draw[->,>=Latex] (0,3pt) -- node[pos=0.3,below]{\smash{$\hat{}$}}++(12pt,0);}}

\substep{Step 1: A tidier problem.}

The elements of an edge $e$ of $B_n$ and $W_n$ are not independent, as $e$ is the union of $k$ intervals of size $ℓ$. This poorly reflects the actual tidiness of the probabilistic object. We may obtain a model with independent elements in edges, by switching to a more general notion of what it means to orient a hypergraph.


Formally, given a weighted hypergraph $H = (V,E,η)$ with weight function $η : V ∪ E → ℕ$, an \emph{orientation} $μ$ of $H$ assigns to each pair $(e,v)$ of an edge and an incident vertex a number $μ(e,v) ∈ ℕ₀$ such that
\begin{equation}
    ∀e ∈ E\colon \sum_{v ∈ e} μ(e,v) = η(e), \text{ and } ∀v ∈ V\colon \sum_{e \ni v} μ(e,v) ≤ η(v).\label{eq:defOrientation}
\end{equation}
We will still say that an edge $e$ \emph{is oriented to} a vertex $v$ (possibly several times) if $μ(e,v) > 0$. One may be inclined to call $η(v)$ a \emph{capacity}  for $v ∈ V$ and $η(e)$ a \emph{demand} for $e ∈ E$, but we use the same letter in both cases as the distinction is dropped later anyway.

Orientability of $H,B$ and $W$ from earlier is also captured in the generalised notion with implicit weights of $η ≡ 1$.

\begin{figure}[htbp]
    \def\whoop#1{\raisebox{1cm}{#1}}
    \begin{tabular}{c@{\hspace{-13pt}}c@{}c@{}c}
        \whoop{\textbf{(a)}}
        &\includegraphics[page=1,scale=.9]{img/block-simplification.pdf}
        &\whoop{$→$}
        &\includegraphics[page=2,scale=.9]{img/block-simplification.pdf}
        \\
        \whoop{\textbf{(b)}}
        &\includegraphics[page=1,scale=.9]{img/make-windows-explicit.pdf}
        &\whoop{$→$}
        &\includegraphics[page=5,scale=.9]{img/make-windows-explicit.pdf}
    \end{tabular}
    \Description{(a) illustrates the transformation from $B_n$ to $\hut B_n$, (b) the transformation from $W_n$ to $\hut W_n$.}
    \caption[fragile]{\textbf{(a)} In $k$-ary cuckoo hashing with blocks of size $ℓ$ (here $k = ℓ = 3$), we can contract each block into a single vertex of weight $ℓ$ to obtain a simpler but equivalent representation of the orientation problem.\\
    \textbf{(b)} In $k$-ary cuckoo hashing with windows of size $ℓ$, a similar idea works, but additional helper edges (drawn as \scalebox{1.5}{\divider}) of weight $ℓ-1$ are needed (see \cref{prop:eliminate-cycles}).\\
    }
    \label{fig:donning-the-hat}
\end{figure}


A simplified representation of $B_n$ is straightforward to obtain. We provide it mainly for illustration purposes, see \cref{fig:donning-the-hat}(a):
\begin{gather}
    \hut B_n := \hut\B := (ℤ_{n/ℓ}, \{e₁,e₂,…,e_{cn}\}, η), \quad \text{where }  e_i ← \brack{ℤ_{n/ℓ}}{k}\\
     \text{ vertex weight $η(v) = ℓ$ for $v ∈ ℤ_{n/ℓ}$ and edge weight $η(e_i) = 1$ for $1 ≤ i ≤ cn$}.\notag
\end{gather}
In $\hut B_n$, each group of $ℓ$ vertices of $B_n$ representing one block is now contracted into a single vertex of weight $ℓ$ and edges contain $k$ independent vertices representing blocks instead of $kℓ$ dependent vertices. It is clear that $B_n$ is orientable if and only if $\hut B_n$ is orientable.


In a similar spirit we identify a transformed version $\hut W_n$ for $W_n$, but this time the details are more complicated as the vertices have an intrinsic linear geometry, whereas $B_n$ featured essentially an unordered collection of internally unordered blocks. The \emph{ordinary edges} in $\hut W_n$ also have size $k$ instead of size $kℓ$, but we need to introduce additional \emph{helper edges} that capture the linear geometry of $ℤ_n$, see \cref{fig:donning-the-hat}(b). We define:
\begin{gather}
    \hut W_n := \hut\W := (ℤ_n,  C_n ∪ \{e₁,…,e_{cn}\},η)\label{eq:defhatW}\\
    \text{ with ordinary edges } e_i ← \brack{ℤ_n}{k}, \text{ helper edges }  C_n = \{h_i := (i,i+1) \mid i ∈ ℤ_n\},\notag\\
    \text{ vertex weight } η(v) = ℓ\ \text { for }\ v ∈ ℤ_n,\notag\\
    \text{ and edge weights } η(h) = ℓ-1,\ η(e) = 1\ \text { for }\ h ∈ C_n, e ∈ \{e₁,…,e_{cn}\}. \notag
\end{gather}
Note that formally the graphs $W_n$ and $\hut W_n$ are random variables on a common probability space. An outcome $ω = (e_i)_{1 ≤ i ≤ cn}$ from this space determines both graphs.

The following proposition justifies the definition. It is proved in \cref{sec:eliminate-cycles}.
\begin{proposition}
    \label{prop:eliminate-cycles}
    $\hut W_n$ is orientable if and only if  $W_n$ is orientable.\footnote{Formally this should read: The events $\{ W_n \text{ is orientable}\}$ and $\{\hut W_n \text{ is orientable}\}$ coincide.}
\end{proposition}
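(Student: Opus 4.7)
The plan is to prove the equivalence pointwise: for every outcome $\omega$ I construct a valid orientation of $\hat W_n$ from a valid orientation of $W_n$, and vice versa. The bridge between the two models is a cyclic ``flow'' carried by the helper edges. Writing $a_w := \mu(c_w, w+1) \in \{0, \dots, \ell - 1\}$, this quantity records the net shift absorbed on the directed edge $w \to w+1$, and in $W_n$ it corresponds to objects being placed at non-trivial offsets within their windows.

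For the direction $W_n \Rightarrow \hat W_n$, given a valid placement $p_i = j_i + \delta_i$ of object $i$ inside its window $[j_i, j_i + \ell - 1]$ (with $j_i \in e_i$ and $\delta_i \in \{0, \dots, \ell - 1\}$), I orient the ordinary edge $e_i$ of $\hat W_n$ to $j_i$, regard each object as an arc $[j_i, p_i)$ on the cycle, and set $a_w$ to be the number of such arcs crossing $w \to w + 1$. A direct count of arc endpoints gives
\[
a_w - a_{w-1} = |\{i : j_i = w\}| - |\{i : p_i = w\}|,
\]
so the vertex-capacity constraint at $w$ in $\hat W_n$ reduces to $|\{i : p_i = w\}| \leq 1$, which holds by validity of the $W_n$-orientation; the range $a_w \in [0, \ell - 1]$ follows because arcs have length at most $\ell - 1$ and their endpoints $p_i$ are pairwise distinct.

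For the converse $\hat W_n \Rightarrow W_n$, the orientation of the ordinary edges fixes a window-start $j_i \in e_i$ for each object but not a placement inside the window. Setting $b_w := |\{i : j_i = w\}|$, the vertex-capacity constraint at $w$ rewrites as $b_w \leq 1 + a_w - a_{w-1}$. It remains to choose pairwise distinct $p_i \in [j_i, j_i + \ell - 1]$, a bipartite matching problem that I plan to solve via Hall's condition. A standard reduction (replacing a hypothetical deficient set by all objects whose window lies in its neighborhood, then splitting into arc components) shows it suffices to consider $S = \{i : j_i \in [u, v - \ell + 1]\}$ for a single proper arc $[u, v] \subset \mathbb{Z}_n$, in which case $N(S) = [u, v]$. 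Summing $b_w \leq 1 + a_w - a_{w-1}$ over $w \in [u, v - \ell + 1]$ and telescoping then yields
\[
|S| \leq (v - u - \ell + 2) + (a_{v - \ell + 1} - a_{u - 1}) \leq v - u + 1 = |N(S)|,
\]
using $a_w \in [0, \ell - 1]$; the global case $N(S) = \mathbb{Z}_n$ reduces to $\sum_w b_w = cn \leq n$ by summing over the entire cycle.

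The main obstacle is this converse direction: unlike the blocks variant, an $\hat W_n$-orientation does not directly prescribe a placement within each window, so a matching argument is required. The helper-edge weight $\ell - 1$ and the resulting bound $a_w \in [0, \ell - 1]$ are calibrated precisely so that the telescoping in Hall's inequality is tight, reflecting that each object can drift at most $\ell - 1$ positions from its window-start.
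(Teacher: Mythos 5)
Your proof is correct. The forward direction tracks the paper's closely: the paper attaches to each occupied cell a \emph{label} recording the window through which it was reached and sets the helper-edge weights by counting rightward labels in the overlap region, whereas you count crossings of arcs $[j_i, p_i)$; both give legitimate (if not identical) helper-edge orientations, with the capacity check going through essentially the same bookkeeping. The converse is where your route diverges genuinely. The paper builds the $W_n$-placement directly and without any matching theorem: it reads $s_i = \hat\mu(c_{i-1}, i)$ and $t_i = \hat\mu(c_i, i)$ off the helper edges, forms intervals $\bar{\mathfrak{w}}_i = [i + s_i, i + \ell - t_i)$, observes that $t_i + s_{i+1} = \ell - 1$ forces these intervals to partition $\mathbb{Z}_n$, and then slots the $\rho_i \leq \ell - s_i - t_i = |\bar{\mathfrak{w}}_i|$ objects directed to $i$ into $\bar{\mathfrak{w}}_i$. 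You instead pose the residual placement as a bipartite matching problem and verify Hall's condition by telescoping $b_w \leq 1 + a_w - a_{w-1}$ over arcs, the bound $a_w \in [0, \ell - 1]$ supplying exactly the $\ell - 1$ units of slack. Both arguments are valid; the paper's is shorter and explicitly constructive, while yours is more modular and makes the calibration of the helper-edge weight $\ell - 1$ conceptually transparent. The one step worth fleshing out is the reduction to single arcs: it relies on the fact that after replacing a deficient $S$ by $S' = \{i : [j_i, j_i + \ell - 1] \subseteq N(S)\}$, each arc component $A$ of $N(S') = N(S)$ does satisfy $N(S_A) = A$ (so that some $S_A$ inherits the deficiency), which is true but deserves a sentence in a full write-up.
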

An important merit of $\hut W_n$ that will be useful in Step 3 is that it is \emph{locally tree-like}, meaning each vertex has a probability of $o(1)$ to be involved in a constant-length cycle. Here, by a cycle in a hypergraph we mean a sequence of distinct edges $e₁,e₂,…,e_j$ such that two consecutive edges share a vertex and $e_j$ and $e₁$ share a vertex.

Note the interesting special case $\hut W_{n,cn}^{2,2}$, which is a cycle of length $n$ with $cn$ random chords, unit edge weights and vertices of weight $2$. Understanding the orientability thresholds for this graph seems interesting in its own right, not just as a means to understand $W_{n,cn}^{2,2}$.

\substep{Step 2: Incidence Graph and Allocations.}
The next step is by no means a difficult or creative one, we merely perform the necessary preparations needed to apply \cite{L:Belief_Propagation:2013}, introducing their concept of an allocation in the process.

This will effectively get rid of the asymmetry between the roles of vertices and edges in the problem of orienting $\hut W_n$, by switching perspective in two simple ways.
The first is to consider the incidence graph $G_n$ of $\hut W_n$ instead of $\hut W_n$ itself, i.e.\ the weighted bipartite graph
\begin{equation}
    G_n = G_{n,cn}^{k,ℓ} = (\underbrace{C_n}_{A_C} ∪ \underbrace{\{e₁,…,e_{cn}\}}_{A_R}, \underbrace{ℤ_n}_B, \underbrace{\text{“$\ni$”}}_{E(G_n)}, \ \ η \ \ ).\label{def:G_n}
\end{equation}
We use $A = A_C ∪ A_R$ to denote those vertices of $G_n$ that were edges in $\hut W_n$, and $B$ for those vertices of $G_n$ that were vertices in $\hut W_n$. Vertices $a ∈ A$ and $b ∈ B$ are adjacent in $G_n$ if $b ∈ a$ in $\hat W_n$. The weights $η$ on vertices and edges in $\hut W_n$ are now vertex weights with $η(a_C) = ℓ-1$, $η(a_R) = 1$, $η(b) = ℓ$ for $a_C ∈ A_C$, $a_R ∈ A_R$, $b ∈ B$. 
The notion of $μ$ being an orientation translates to $μ$ being a map $μ\colon E(G_n) → ℕ₀$ such that $\sum_{b ∈ N(a)} μ(a,b) = η(a)$ for all $a ∈ A$ and $\sum_{a ∈ N(b)} μ(a,b) ≤ η(b)$ for all $b ∈ B$. Note that vertices from $A$ need to be \emph{saturated} (“$= η(a)$” for $a ∈ A$) while vertices from $B$ need not be (“$≤ η(b)$” for $b ∈ B$). This leads to the second switch in perspective.

Dropping the saturation requirement for $A$, we say $μ$ is an \emph{allocation} if $\sum_{u ∈ N(v)} μ(u,v) ≤ η(v)$ for all $v ∈ A ∪ B$.\label{def:allocation}




Clearly, any orientation is an allocation, but not vice versa; for instance, the trivial map $μ ≡ 0$ is an allocation. Let $|μ|$ denote the size of an allocation, i.e.\ $|μ| = \sum_{e ∈ E} μ(e)$. By bipartiteness, no allocation can have a size larger than the total weight of $A$, i.e.
\[ \text{for all allocations $μ$}\colon |μ| ≤ η(A) = \sum_{a ∈ A} η(a) = |A_C|·(ℓ-1)+|A_R|·1 = (ℓ-1+c)n\]
and the orientations of $G_n$ are precisely the allocations of size $η(A)$. We conclude:

\begin{proposition}
    \label{prop:allocations-vs-orientations}
    Let $M(G_n)$ denote the maximal size of an allocation of $G_n$. Then
    \[ \tfrac{M(G_n)}{n} = ℓ-1+c\quad \text{if and only if} \quad \text{$\hut{W}_n$ is orientable.}\] 
\end{proposition}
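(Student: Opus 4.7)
The plan is to split the two ``if and only if'' claims and dispatch each by unpacking definitions. The crucial ingredient, already recorded in the running text just before the proposition, is the universal upper bound $|\mu| \leq \eta(A) = (\ell-1+c)n$ on any allocation. Everything beyond that is bookkeeping between the incidence graph $G_n$ and the hypergraph $\hat W_n$.

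For the first equivalence, ``$M(G_n)/n = \ell - 1 + c$ iff $G_n$ is orientable'', I would use bipartiteness of $G_n$ to rewrite any allocation as
$$|\mu| \;=\; \sum_{a \in A} \sum_{b \in N(a)} \mu(a,b) \;\leq\; \sum_{a \in A} \eta(a) \;=\; (\ell-1+c)n,$$
with equality iff every $a \in A$ is saturated, i.e.\ iff $\sum_{b \in N(a)} \mu(a,b) = \eta(a)$ for all $a \in A$. This saturation condition is precisely what distinguishes an orientation from an arbitrary allocation; hence an allocation achieves the upper bound iff it is an orientation, and the equivalence follows.

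For the second equivalence, that $G_n$ is orientable iff $\hat W_n$ is, I would simply verify that the incidence graph construction carries orientations faithfully. Since $G_n$ has vertex weights inherited from the edge weights of $\hat W_n$ on $A_R \cup A_C$ and from the vertex weights of $\hat W_n$ on $B$, a map on $E(G_n)$ and a function $\mu(e,v)$ on incident edge--vertex pairs of $\hat W_n$ describe the same data. Under this identification, the saturation condition on $A$ in the definition of orientation of $G_n$ is exactly \eqref{eq:defOrientation} for edges, and the capacity condition on $B$ is exactly \eqref{eq:defOrientation} for vertices. So orientations of $G_n$ and of $\hat W_n$ correspond bijectively.

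The main obstacle is essentially nil: the proposition is a formal repackaging and the whole argument lives in a few lines once one has written down the correct double summation. The only thing to watch is to keep the asymmetric roles of $A$ and $B$ straight and to organise the sum $|\mu|$ by summing at $A$ (not $B$), so that the upper bound becomes tight exactly at orientations.
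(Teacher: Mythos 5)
Your argument is correct and matches the paper's own (inline) justification: the bound $|\mu| \leq \eta(A) = (\ell-1+c)n$ holds for every allocation by summing over $A$, equality forces saturation of every $a \in A$ which is exactly the orientation condition, and orientations of $G_n$ correspond definitionally to orientations of $\hat W_n$ via the incidence-graph dictionary. This is precisely the reasoning the paper carries out in the paragraph immediately preceding the proposition, just written out more explicitly.
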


\substep{Step 3: The Limit $\bm{T}$ of $\bm{G_n}$.} Reaping the benefits of step $1$, we find $G_n$ to have $O(1)$ cycles of length $O(1)$ \whp. To capture the local appearance of $G_n$ even more precisely, let the \emph{$r$-ball} around a vertex $v$ in a graph be the subgraph induced by the vertices of distance at most $r$ from $v$. Then the $r$-ball around a random vertex of $G_n$ is distributed, as $n$ gets large, more and more like the $r$-ball around the root of a random infinite rooted tree $T = T_c^{k,ℓ}$. The tree $T$ is distributed as follows, with weighted nodes of types $A_C,A_R$ or $B$.\label{def:T}



\begin{itemize}
    • The root of $T$ is of type $A_C$, $A_R$ or $B$ with probability $\frac{1}{2+c}$, $\frac{c}{2+c}$ and $\frac{1}{2+c}$, respectively.
    • If the root is of type $A_C$, it has two children of type $B$. If it is of type $A_R$, it has $k$ children of type $B$. If it is of type $B$, it has two children of type $A_C$ and a random number $X$ of children of type $A_R$, where $X \sim \Po(kc)$. Here $\Po(λ)$ denotes the Poisson distribution with parameter $λ$.
    • A vertex of type $A_C$ that is not the root has one child of type $B$. A vertex of type $A_R$ that is not the root has $k-1$ children of type $B$.
    • A vertex of type $B$ that is not the root has a random number $X$ of children of type $A_R$, where $X \sim \Po(kc)$. If its parent is of type $A_C$, then it has one child of type $A_C$, otherwise it has two children of type $A_C$.
    • 
    Vertices of type $A_C$, $A_R$ and $B$ have weight $ℓ{-}1$, $1$ and $ℓ$, respectively.
\end{itemize}
All random decisions should be understood to be independent. A type is also treated as a set containing all vertices of that type. In \cref{sec:local-weak-convergence} we briefly recall the notion of \emph{local weak convergence} and argue that the following holds:

\begin{proposition}
    \label{prop:local-weak-convergence}
    Almost surely\footnote{Meaning: With probability $1$.}, $(G_n)_{n∈ℕ} = (G_{n,cn}^{k,ℓ})_{n∈ℕ}$ converges locally weakly to $T = T_c^{k,ℓ}$.
\end{proposition}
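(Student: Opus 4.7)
The plan is to verify the defining condition of local weak convergence: for every radius $r \in \mathbb{N}$ and every finite rooted typed graph $H$, the probability that the $r$-ball around a uniformly random vertex $v_n$ of $G_n$ is isomorphic, as a rooted typed graph, to $H$ converges to the analogous probability for the $r$-ball around the root of $T$. I would prove this by the standard route: a coupled step-by-step BFS exploration of the $r$-ball of $G_n$ around $v_n$, matched against the recursive generation of $T$.

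First, the root types match, because $|V(G_n)| = (2+c)n$ with $|A_C| = |B| = n$ and $|A_R| = cn$, so $v_n$ is of type $A_C$, $A_R$, $B$ with probabilities $\tfrac{1}{2+c}$, $\tfrac{c}{2+c}$, $\tfrac{1}{2+c}$, as prescribed for the root of $T$. Conditional on the type of the currently explored vertex, the per-step offspring distributions agree with those of $T$ up to $o(1)$ errors: a vertex $c_i \in A_C$ always has the two $B$-neighbors $i, i+1$; a vertex $e_i \in A_R$ has $k$ $B$-neighbors drawn independently and uniformly from $\mathbb{Z}_n$, which are distinct and disjoint from any finite pre-explored set with probability $1 - O(1/n)$; a vertex $b \in B$ has the two deterministic $A_C$-neighbors $c_{b-1}, c_b$ together with $\Bin(cn, 1 - (1 - 1/n)^k)$ many $A_R$-neighbors, whose factorial moments converge to those of $\Po(kc)$. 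For non-root vertices the same distributions apply with one slot occupied by the parent, exactly matching the non-root offspring rule of $T$.

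The remaining ingredient is tree-likeness: with probability $1 - o(1)$, the explored $r$-ball is cycle-free, so the per-step distributional matches assemble into convergence of the full $r$-ball. The helper cycle $C_n$ has length $n$, which eventually exceeds $2r$, so it cannot close a cycle within an $r$-ball; cycles through random hyperedges are controlled by a first-moment bound. Letting $N_r$ denote the number of vertices within distance $r$ of $v_n$, I would stochastically dominate $N_r$ by a branching process whose offspring means depend only on $k$ and $c$, giving $\E[N_r] = O_{r,k,c}(1)$, and then bound the probability of any ``bad'' event---two explored vertices joined by an unexpected hyperedge, or a revealed hyperedge revisiting a previously seen vertex---by $O(\E[N_r]^2 / n) = O(1/n)$. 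This suffices for $r$-ball convergence at every fixed $r$, which is local weak convergence. The main obstacle, though not conceptually deep, is the bookkeeping around the multiset structure of $\brack{\mathbb{Z}_n}{k}$ and the interaction between the random hyperedges and the deterministic helper cycle; both are absorbed uniformly into the $O(1/n)$ collision estimates above.
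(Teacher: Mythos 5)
Your proposal is correct and follows essentially the same route as the paper: check that the root-type distribution matches, verify that the per-vertex offspring distributions (deterministic $A_C$-neighbours for $B$-vertices, $\Bin \to \Po(kc)$ for the $A_R$-neighbours, uniform $B$-neighbours for $A_R$-vertices) converge to those of $T$, and absorb collisions and multi-edges into $o(1)$ error terms via a first-moment argument. The paper only sketches this argument at depth $d=3$ and appeals to "repeating similar arguments," so your explicit branching-process domination and the observation that $C_n$ is too long to close a cycle in any fixed-radius ball are, if anything, a bit more careful than the published proof.
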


\substep{Step 4: The Method of Lelarge.} We are now in a position to apply a powerful theorem due to Leconte, Lelarge, and Massoulié \cite{L:Belief_Propagation:2013} that characterises $\lim_{n→∞} \frac{M(G_n)}{n}$ in terms of solutions to belief propagation equations for $T$. Put abstractly: The numerical limit of a function of $G_n$ is expressed as a function of the graph limit of $G_n$. We elaborate on details and deal with the equations in \cref{sec:messagePassing}. After condensing the results into a characterisation of $γ_{k,ℓ} ∈ (0,1)$ in terms of “well-behaved” functions we obtain:
\begin{proposition}\ \vspace{-\baselineskip}
    \label{prop:threshold-up-to-on}
    \[ \lim_{n → ∞}\tfrac{M(G_{n,cn})}{n}\ \begin{cases}
        \ = ℓ - 1 + c\quad \AS &\quad \text{if } c < γ_{k,ℓ}\\
        \ < ℓ - 1 + c\quad \AS &\quad \text{if } c > γ_{k,ℓ}.
    \end{cases}\]
\end{proposition}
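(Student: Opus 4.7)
The plan is to invoke the master theorem of \textcite{L:Belief_Propagation:2013}, which shows that whenever a sequence of locally finite bipartite graphs converges locally weakly to an infinite rooted random tree and has suitable degree moments, $M(G_n)/n$ converges almost surely to a deterministic limit $\Psi_{k,\ell}(c)$ expressible as an expectation over the limit tree via a set of belief-propagation (BP) fixed-point equations. \cref{prop:local-weak-convergence} supplies the local weak convergence $G_n \Rightarrow T_c^{k,\ell}$, and the Poisson reproduction law of $T$ provides the required moment hypotheses. By \cref{prop:allocations-vs-orientations}, the proposition is then equivalent to identifying the set of $c$ for which $\Psi_{k,\ell}(c) = \ell - 1 + c$, i.e.\ for which the selected BP fixed point saturates every $A$-vertex of $G_n$ in the limit.

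The next step is to write out the BP recursions on $T$, which carry four families of messages, indexed by direction along an edge and by the types $A_C$, $A_R$, $B$ of its endpoints. Because an $A_C$-vertex has weight $\ell-1$ and only a single non-parent neighbour, $A_C$-messages reduce to scalars; because a $B$-vertex has a $\Po(kc)$ cloud of $A_R$-children, outgoing $B \to A_R$ messages are a Poisson thinning/convolution of the incoming ones. Exploiting these simplifications collapses the entire fixed-point system to a one-parameter family indexed by a single positive intensity $\lambda$; this $\lambda$ is precisely the parameter controlling the functions $f_{k,\ell}$ and $g_{k,\ell}$ defined in \cref{sec:messagePassing}.

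Interpreting the parameters, $f_{k,\ell}(\lambda)$ records the value of $c$ consistent with a BP fixed point of intensity $\lambda$, while $g_{k,\ell}(\lambda)$ records the corresponding deficit $\ell - 1 + c - \Psi_{k,\ell}$. The definition $\gamma_{k,\ell} = \inf_\lambda\{f_{k,\ell}(\lambda) \mid g_{k,\ell}(\lambda) < 0\}$ then pinpoints the smallest $c$ admitting a non-saturating fixed point, and the variational principle from \cite{L:Belief_Propagation:2013} selects this fixed point whenever one is available. A standard monotonicity argument in $c$ together with continuity of $\Psi_{k,\ell}$ then yields the stated dichotomy: for $c < \gamma_{k,\ell}$ every admissible fixed point saturates, so $\Psi_{k,\ell}(c) = \ell - 1 + c$; for $c > \gamma_{k,\ell}$ the selected fixed point is non-saturating, so $\Psi_{k,\ell}(c) < \ell - 1 + c$.

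The main obstacle will be the explicit derivation and simplification of the BP equations for $T$. Our tree mixes the deterministic cycle helper edges (contributing the one or two $A_C$-children of each $B$-vertex) with a Poisson cloud of random $A_R$-children, and carries three distinct weights $\ell{-}1$, $1$, $\ell$, so the fixed-point operator is substantially more intricate than in the single-species settings originally treated in the cited paper; reducing the full system rigorously to the scalar parameter $\lambda$ rather than merely formally is where the bulk of the effort lies. Verifying the moment and bounded-expected-degree hypotheses of Lelarge's theorem in our Poisson-degree setting requires some bookkeeping but is routine. The $o(n)$ gap between $M(G_n)/n = \ell - 1 + c$ and the existence of an actual orientation is not addressed here; as announced in the introduction, it will be closed separately by a local-switching argument.
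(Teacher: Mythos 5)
Your proposal follows essentially the same route as the paper: invoke Lelarge's theorem (the paper's \cref{prop:appliedLelarge}) to express $\lim M(G_n)/n$ as an infimum over BP fixed points on $T$, then parameterize the non-trivial fixed points of the relaxed system by a scalar $\lambda$ (\cref{lem:parameterise-solutions}), identify $f_{k,\ell}(\lambda) = c_\lambda$ and $g_{k,\ell}(\lambda) = F(\vec{\rho}_\lambda,c_\lambda) - (\ell-1+c_\lambda)$, and read off the dichotomy from the definition of $\gamma_{k,\ell}$. The $c < \gamma_{k,\ell}$ direction is immediate from the definition, exactly as you indicate.

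One point deserves to be sharpened. For $c > \gamma_{k,\ell}$, the definition of $\gamma_{k,\ell}$ only produces some $\lambda$ with $c_\lambda \in [\gamma_{k,\ell}, c)$ and $F(\vec{\rho}_\lambda,c_\lambda) \leq \ell - 1 + c_\lambda - \varepsilon'$; it does \emph{not} directly produce a non-saturating fixed point sitting at the value $c$ itself, so saying ``the selected fixed point is non-saturating'' at load $c$ is a leap. Your appeal to ``monotonicity in $c$ together with continuity of $\Psi_{k,\ell}$'' points at the right idea but needs to be made quantitative, since mere continuity would not preserve the \emph{strict} inequality. The paper instead uses the concrete $1$-Lipschitz coupling: $G_{n,cn}$ is obtained from $G_{n,c_\lambda n}$ by adding $(c - c_\lambda)n$ vertices of weight $1$, which can raise a maximum allocation by at most $(c - c_\lambda)n$, hence $\lim M(G_{n,cn})/n \leq \ell - 1 + c_\lambda - \varepsilon' + (c - c_\lambda) = \ell - 1 + c - \varepsilon'$. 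Replacing your vague continuity step with this coupling estimate closes the argument.
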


\substep{Step 5: Closing the Gap.} It is important to note that we are not done, as
\begin{equation}
    \lim_{n→∞}\tfrac{M(G_{n,cn})}{n} = ℓ-1+c\ \ \ASshort \text{\quad does not imply\quad } M(G_{n,cn}) = n·(ℓ-1+c)\ \whp.\label{eq:peskydistinction}
\end{equation}
We still have to exclude the possibility of a gap of size $o(n)$ on the right hand side; imagine for instance $M(G_{n,cn}) = (ℓ-1+c)n - \sqrt{n}$ to appreciate the difference.
In the setting of cuckoo hashing with double hashing \cite{Leconte:Cuckoo:2013}, the analogue of this pesky distinction was in the way of obtaining precise thresholds, until recently tamed by a somewhat lengthy case analysis in \cite{MPW:CuckooDoubleHashing:2018}. We should therefore treat this carefully.


Luckily the line of reasoning by Lelarge \cite{L:A_New_Approach:2012} can be adapted to our more general setting. The key is to prove that if not all objects can be placed into the hash table,
then the configuration causing this problem has size $Θ(n)$ (and those large overfull structures do not go unnoticed on the left side of (\ref{eq:peskydistinction})).
\begin{lemma}
    \label{lem:no-small-hall-wittnesses}
    There is a constant $δ > 0$ such that \whp no set of\ \ $0 < t < δn$ vertices in $\hat{W}_n$ (of weight $ℓt$) induces edges of total weight $ℓt$ or more, provided $c ≤ 1$.
\end{lemma}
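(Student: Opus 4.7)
The plan is to convert the weight condition into a count of induced ordinary edges and then apply a first-moment bound. A vertex set $S \subseteq \mathbb Z_n$ with $|S| = t$ decomposes into $a$ maximal arcs on the cycle underlying $\hat W_n$, and exactly $t-a$ helper edges (those connecting adjacent vertices within the same arc) lie inside $S$, contributing helper weight $(t-a)(\ell-1)$. Hence for the total induced edge weight to reach $\ell t$, the number $m$ of ordinary edges fully inside $S$ must satisfy
\[
m \ge \ell t - (t-a)(\ell-1) = t + a(\ell-1).
\]
It therefore suffices to show that \whp no $S$ with $1 \le |S| = t < \delta n$ contains at least $t + a(\ell-1)$ ordinary edges, where $a$ is its arc count.

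Since each of the $k$ vertices of an ordinary edge is drawn uniformly and independently from $\mathbb Z_n$, the probability that a specific ordinary edge is contained in $S$ is $(t/n)^k$. Writing $N(t,a) \le \binom{n}{a}\binom{t-1}{a-1}$ for the number of $t$-subsets of $\mathbb Z_n$ with exactly $a$ arcs, the expected number of witnesses is at most
\[
\sum_{t,a}\sum_{m \ge t+a(\ell-1)} N(t,a)\binom{cn}{m}(t/n)^{km}.
\]
The tail in $m$ is absorbed by a geometric series, contributing only a constant factor once $\beta := t/n$ is small. Using $N(t,a) \le (e^2 nt/a^2)^a$ and $\binom{cn}{m_0}(t/n)^{km_0} \le (ect^k/(n^{k-1} m_0))^{m_0}$ with $m_0 := t+a(\ell-1)$, a routine simplification in $\alpha := a/t$ yields
\[
\mathbb E[\#\text{ witnesses of type }(t,a)] \le \bigl[F_{k,\ell,c}(\alpha)\cdot\beta^{\gamma(\alpha)}\bigr]^t,
\]
where $\gamma(\alpha) := (1+\alpha(\ell-1))(k-1)-\alpha \ge 1$ for all $k,\ell \ge 2$, and the factor $F_{k,\ell,c}$ is, under $c \le 1$, uniformly bounded on $\alpha \in (0,1]$ by some explicit constant $F_0 = F_0(k,\ell)$.

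To finish, choose $\delta$ small enough that $2eF_0\delta < 1$. Since $\gamma(\alpha) \ge 1$ and $\beta \le \delta$, each term is bounded by $(F_0\beta)^t$. Summing over $a \in \{1,\dots,t\}$ gives $t(F_0\beta)^t$, and the overall sum $\sum_{t=1}^{\delta n} t(F_0 t/n)^t$ is geometric in $t$ with successive-term ratio at most $2eF_0(t+1)/n \le 1/2$ throughout the range. It is thus dominated by the $t=1$ term, which is of order $F_0/n = O(1/n)$, so the expected number of witnesses is $o(1)$ and Markov's inequality delivers the lemma.

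The main obstacle is verifying uniform boundedness of $F_{k,\ell,c}(\alpha)$ on $\alpha \in (0,1]$. The binding extreme is $\alpha = 1$ (isolated vertices, where $S$ induces no helper edges and the entire weight $\ell t$ must be supplied by ordinary edges), and here the assumption $c \le 1$ enters essentially; the opposite limit $\alpha \to 0$ (near single-arc sets) reduces to a harmless $(ec\beta^{k-1})^t$. Without $c \le 1$, the $\alpha = 1$ regime would require a much smaller $\delta$ or fail outright once $c$ exceeds a fixed problem-specific constant.
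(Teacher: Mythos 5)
Your argument is correct and follows essentially the same route as the paper's: a first-moment bound over candidate bad sets, decomposed by the number $a$ of arcs, using the observation that a set with $a$ arcs induces helper weight $(\ell-1)(t-a)$ and hence needs at least $t+a(\ell-1)$ induced ordinary edges, then bounding the arc-count, edge-choice, and placement-probability factors. You retain the full exponent $k$ and the full $(\ell-1)$ factor where the paper relaxes to $2$ and to $1$ respectively, so your $\beta$-exponent $\gamma(\alpha)\ge k-1$ is slightly sharper, but the conclusion is the same; your closing remark that the proof would ``fail outright'' for $c$ above some constant is not quite right — for any fixed $c$ one can pick a $c$-dependent $\delta$, and the hypothesis $c\le 1$ is merely a convenience to get a uniform constant — though this side comment does not affect the validity of the proof under the stated hypothesis.
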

The proof of this Lemma (using first moment methods) and the final steps towards our main theorem are found in \cref{sec:mainproof}. 

\ifdefined\extendedAbstract
\else
    The following sections provide the details of the technical argument. Conclusion, outlook and acknowledgements can be found at the end of the paper.
\fi


\section{\texorpdfstring{Equivalence of $\bm{W_n}$ and $\bm{\hut W_n}$ with respect to orientability}{Equivalence of W and W hat with respect to orientability}}
\label{sec:eliminate-cycles}

\begin{figure}[htbp]
    \def\letter#1{\raisebox{1.5cm}{\textbf{(#1)}}}
    \begin{tabular}{c@{\hspace{-0.5cm}}cc@{\hspace{-1cm}}c}
        \letter{a}
        &\includegraphics[page=1,scale=.9]{img/make-windows-explicit.pdf}
        &\letter{b}
        &\includegraphics[page=2,scale=.9]{img/make-windows-explicit.pdf}\\
        \letter{c}
        &\includegraphics[page=3,scale=.9]{img/make-windows-explicit.pdf}
        &\letter{d}
        &\includegraphics[page=4,scale=.9]{img/make-windows-explicit.pdf}
    \end{tabular}

    \Description{In (a) there is a linear arangement of 15 boxes. Three circular marks are connected to three intervals of $3$ boxes each. In (b) there is a diamond mark above each interval of $3$ boxes. These are connected to those three boxes. The circular marks are now connected to the corresponding diamond marks. In (c) there are additional gray marks in between the diamond marks, each pointing to a position in between two boxes. The diamond marks are now only connected to the boxes in between the positions indicated by the two neighbouring gray marks. In (d) the boxes are gone and the gray marks are connected to the two neighbouring diamond marks.}
    \caption{Drawing of (a part of) $W_n$ in \textbf{(a)} and $\hut W_n$ in \textbf{(d)} with two intermediate ideas implicit in the proof of \cref{prop:eliminate-cycles}.}
    \label{fig:make-acyclic}
\end{figure}

In this section we prove \cref{prop:eliminate-cycles}, i.e.\ we show that $W_n$ is orientable if and only if $\hut W_n$ is orientable. Recall the relevant definitions in \cref{eq:defW,eq:defhatW,eq:defOrientation}.
Some ideas are illustrated in \cref{fig:make-acyclic}.
We start with the hypergraph $W_n$ in \textbf{(a)}. From this, \textbf{(b)} is obtained by introducing one “broker-vertex” (\scalebox{1.5}{\window}) for each interval of size $ℓ$ in the table, through which the incidences of the objects (\scalebox{1.5}{\object}) are “routed” as shown. The purpose of each broker-vertex is to “claim” part of its interval on behalf of incident objects.
To manage these claims, we imagine a “separator” (\scalebox{1.5}{\divider}) between each pair of adjacent broker-vertices that, by pointing between two table cells, indicates where the claim of one broker-vertex ends and the claim of the next broker-vertex begins, see \textbf{(c)}. There are $ℓ$ possible “settings” for each separator. The separators can be modelled as edges of weight $ℓ-1$ with $ℓ$ possible ways to distribute this weight among the two incident broker-vertices that have weight $ℓ$. The table is then fully implicit, which gives $\hut W_n$ in \textbf{(d)}.


\begin{proof}[Proof of \cref{prop:eliminate-cycles}] We introduce the shorthand $\w_i := [i,i+ℓ)$ for this proof.
    \begin{description}
        •[$⇒$] Let $μ$ be an orientation of $W_n$. We will define an orientation $\hat{μ}$ of $\hut W_n$. Recall from \cref{eq:defW,eq:defhatW} how an edge $e' = \bigcup_{j ∈ e} \w_j$ of $W_n$ is defined in terms of an edge $e$ of $\hut W_n$. If $μ$ directs $e'$ to a table cell $x ∈ ℤ_n$, we pick $j ∈ e$ with $x ∈ \w_j$. We let $\hat{μ}$ direct $e$ to $j$ and also assign to $x$ the \emph{label} $\w_j$.
        
         Note that, since $μ$ is an orientation, each $x ∈ ℤ_n$ receives at most one label this way, and that the label stems from $\{\w_{x-ℓ+1},…,\w_x\}$.
        
        We still have to orient the helper edges $h_i = (i,i+1)$ of weight $ℓ{-}1$. For this, we count the number $r_i$ of elements in $\w_i ∩ \w_{i+1}$ with a label that is to the \emph{right}, i.e.\ stems from $\{\w_{i+1},\w_{i+2},…,\w_{i+ℓ-1}\}$. We then set $\hat{μ}(h_i, i) = r_i$ and $\hat{μ}(h_i, i+1) = ℓ{-}1-r_i$.
        
        We now check that the weight of any vertex $i ∈ ℤ_n$ is respected, i.e.\ we check that
        \[ \hut{μ}(h_i,i) + \hut{μ}(h_{i-1},i) + \sum_{\mathclap{\substack{\text{ ordinary edge $e$}\\i ∈ e}}} \hut{μ}(e,i) \stackrel{!}{≤} ℓ.\]
        From ordinary edges, the contribution is $1$ for each $x ∈ \w_i$ with label $\w_i$. From $h_i$ the contribution is the number of $x ∈ \w_i ∩ \w_{i+1}$ with label in $\{\w_{i+1}, …, \w_{i+ℓ-1}\}$ and from $h_{i-1}$, the contribution is the number of $x ∈ \w_{i-1} ∩ \w_i$ \emph{not} having a label from $\{\w_i, …, \w_{i+ℓ-2}\}$. The three conditions are clearly mutually exclusive, so each $x ∈ \w_i$ can contribute at most $1$, giving a total contribution of at most $|\w_i| = ℓ$ as required.

        •[$⇐$] Let $\hat{μ}$ be an orientation of $\hut W_n$. Define $s_i := \hat{μ}(h_{i-1}, i)$ and $t_i := \hat{μ}(h_{i}, i)$. Let further $\bar{\w}_i := [i+s_i,i+ℓ-t_i) ⊆ \w_i$.
        
        Crucially, $\{\bar{\w}_i \mid i ∈ ℤ_n\}$ forms a partition of $ℤ_n$. This follows from the following properties:
        \begin{gather*}
                \max(\bar{\w}_i) = i + ℓ - t_i -1, \quad \min(\bar{\w}_{i+1}) = i+1+s_{i+1}\\
                \text{and} \quad t_i + s_{i+1} = \hat{μ}(h_i, i) + \hat{μ}(h_i, i+1) = η(h_i) = ℓ-1.
        \end{gather*}
        Here, if cyclic intervals span the “seam” of the cycle, $\max$ and $\min$ should be reinterpreted in the natural way. Now let $e^{(i)}₁,…,e^{(i)}_{ρ_i}$ be the ordinary edges directed to $i$ by $\hat{μ}$. Since $\hat{μ}$ respects $η(i) = ℓ$ we have $ρ_i + s_i + t_i ≤ ℓ$ , so $ρ_i ≤ ℓ - s_i - t_i = |\bar{\w}_i|$. We can now define the orientation $μ$ of $W_n$ to direct each $e^{(i)}_j{}'$ to $i+s_i+j - 1∈ ℤ_n$ for $1 ≤ j ≤ ρ_i$ and $i ∈ ℤ_n$.\qedhere
    \end{description}
\end{proof}


\section{Local weak convergence of \texorpdfstring{$\bm{G_n}$ to $\bm{T}$}{Gn to T}}
\label{sec:local-weak-convergence}

Recall the definitions of the finite graph $G_n$ in \cref{def:G_n} and the infinite rooted tree $T$ in Step 3 of \cref{def:T}. We obtain the rooted graph $G_n(∘)$ from $G_n$ by distinguishing one vertex---the root---uniformly at random.
For any rooted graph $R$ and $d ∈ ℕ$, let $R|_d$ denote the rooted subgraph of $R$ induced by the vertices at distance at most $d$ from the root. We treat two rooted graphs as equal if there is an isomorphism between them preserving root and vertex types. Refer to \cref{fig:G-and-T-locally} for a possible outcome of $T|₃$ and $G_n(∘)|₃$.



    
    With this notation, we can clarify \cref{prop:local-weak-convergence}, i.e.\ what we mean by saying that almost surely $(G_n)_{n ∈ ℕ}$ \emph{converges locally weakly to} $T$, namely
    \begin{equation}
        ∀d ∈ ℕ\colon ∀\text{ rooted graph $H$}\colon \lim_{n→∞}\Pr_{∘ ∈ V(G_n)}[G_n(∘)|_d = H] = \Pr[T|_d = H] \text{ almost surely}.\label{eq:local-weak-convergence}
    \end{equation}
    Here “almost surely” refers to the randomness in choosing the sequence $(G_n)_{n∈ ℕ}$ while $\Pr_{∘ ∈ V(G_n)}$ refers to the randomness in rooting $G_n$.
%
For a more general definition and a plethora of illustrating examples, refer to the excellent survey on the \emph{objective method} by Aldous and Steele \cite{AS:Objective_Method:2004}. 



The proof of (\ref{eq:local-weak-convergence}) is technical but standard. We will take the time to sketch the proof of the slightly simpler statement:
\begin{equation}
        ∀d ∈ ℕ\colon ∀\text{ rooted graph $H$}\colon \lim_{n→∞}\Pr_{G_n, ∘ ∈ V(G_n)}[G_n(∘)|_d = H] = \Pr[T|_d = H].\tag{\ref{eq:local-weak-convergence}'}\label{eq:local-weak-convergence-prime}
\end{equation}
Note that in (\ref{eq:local-weak-convergence}) the probability in the limit is a random variable (depending on $(G_n)_{n∈ℕ}$), while the probability in (\ref{eq:local-weak-convergence-prime}) is not.
 We refer to \cite{Leconte:Cuckoo:2013} where a similar proof is done in full, including concentration arguments to settle the difference between the analogues of (\ref{eq:local-weak-convergence}) and (\ref{eq:local-weak-convergence-prime}).

 \begin{figure}[htbp]
    \includegraphics{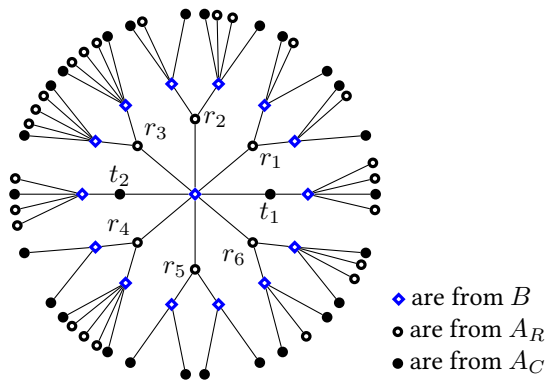}
    \Description{A tree drawn with a (root) vertex from $B$ in the center, growing in all directions. Its children are drawn around it, there are two vertices $t₁$ and $t₂$ from $A_C$ as well as six vertices $r₁,…,r₆$ from $A_R$. The vertices $t₁$ and $t₂$ have one child each, the vertices $r₁,…,r₆$ have two children each, all of which are from $B$. All these vertices have a random number of children from $A_R$ and exactly two neighbours from $A_C$ (i.e. $1$ or two children from $A_C$). All vertices at distance $3$ from the root have no further children.}
    \caption{One possibility of what $T|₃$ may look like for $k = 3$. Since the infinite random tree $T$ is designed to reflect the local characteristics of $G_n$, it is also a possibility for $G_n(∘)|₃$. Actually, the distributions of $G_n(∘)|₃$ and $T|₃$ are asymptotically equal.
    }
    \label{fig:G-and-T-locally}
 \end{figure}
 
 \begin{proof}[Proof of \cref{eq:local-weak-convergence-prime}]
    For simplicity we shall assume that child vertices of the same type are ordered (for $G_n(∘)$ and $T$ just fix an ordering at random). Correspondingly, we prove the more fine-grained version of (\ref{eq:local-weak-convergence-prime}) where equality indicates the existence of an isomorphism preserving roots, types and child orderings.
    
    Let $H$ be a possible outcome of $T|_d$, let $t ∈ \{A_C, A_R, B\}$ be the type of the root of $H$ and let $v₁,…,v_s$ be the vertices of type $B$ in $H$, except for leaves, in breadth-first-search ordering. Let further $x_i ∈ ℕ₀$ denote the number of $v_i$'s children of type $A_R$ for $1 ≤ i ≤ s$. Checking the definition of $T$, the sequence $(t,x₁,…,x_s)$ contains all random decisions that $T$ has to “get right” in order for $T|_d$ to coincide with $H$, i.e.
    \[ \Pr[T|_d = H] = \Pr[\mathrm{root}(T) ∈ t] · \prod_{i = 1}^s \Pr[X_i = x_i]\]
    where $X_i \sim \Po(ck)$ for $i ∈ ℕ$ are the independent random variables used in the construction of $T$, in breadth-first-search order.
    
    To compare this to $\Pr_{G_n, ∘ ∈ G_n}[G_n(∘)|_d = H]$, we shall reveal the type of $∘$ and then the neighbourhoods of vertices from $G_n(∘)|_d$ one by one in breadth-first-search order and check equality with $H$. Let $w₁,w₂,…$ be the vertices of type $B$ in breadth-first-search ordering and $C(w_i) ≔ N(w_i) \setminus \mathrm{parent}(w_i)$ the neighbourhood of $w_i$, except for the vertex from which $w_i$ was discovered (if $w_i ≠ {∘}$). Recall that the $cn$ vertices from $A_R$ each choose $k$ among the $n$ vertices of $B$ uniformly at random, so for $v ∈ B$ we have $|N(v) ∩ A_R| \sim \Bin(ckn, \frac 1n)$. However, when revealing $Y_i := |C(w_i) ∩ A_R|$, then a constant number $α_i$ of incidences of vertices from $A_R$ are already revealed and the full neighbourhoods of a constant number $β_i$ of vertices of type $B$ are already revealed. Thus, conditioned on $G_n(∘)$ matching with $H$ until before $N(w_i)$ is revealed, we have $Y_i \sim \Bin(ckn - α_i, \frac{1}{n-β_i})$. A second complication is that $G_n(∘)$ can contain cycles. Therefore, whenever we reveal the identity of a vertex of type $A_R$ we shall assume that it is not one of the vertices already seen and when we reveal the identity of a vertex of type $B$ found as a child of a vertex of type $A_R$, we shall assume that its position in $ℤ_n$ is not within distance $d$ of a vertex of type $B$ already seen. The probability of these events is clearly $1 - o(1)$ since only a constant number of vertices are forbidden. It is important to note that, since $H$ is a possible outcome for $T|_d$, all remaining aspects of $H$ and $G_n(∘)$ coincide by construction (e.g. the degree of vertices of types $A_R$ and $A_C$). We get
    \[ \Pr_{G_n, ∘ ∈ V(G_n)}[G_n(∘)|_d = H] = (1-o(1))\Pr[{∘} ∈ t] · \prod_{i = 1}^s \Pr[Y_i = x_i \mid \text{$G_n(∘)$ matches with $H$ up to $v_i$}].\]
    Since $\Pr[∘ ∈ t] = \Pr[\mathrm{root}(T) ∈ t]$ by construction and due to the convergence of Binomial to Poisson random variables, we get $\lim\limits_{n→∞}\Pr_{G_n, ∘}[G_n(∘)|_d = H] = \Pr[T|_d = H]$ as desired.
 \end{proof}

\section{Belief Propagation on the Limiting Tree \texorpdfstring{$\bm{T}$}{T}}
\label{sec:messagePassing}

\def\Idw{I_{A_C → B}}
\def\Iwd{I_{B → A_C}}
\def\Iow{I_{A_R → B}}
\def\Iwo{I_{B → A_R}}


Recall the definition and relevance of large allocations from \cref{prop:allocations-vs-orientations} and consider the task of finding a large allocation $μ$ for $G_n$. Imagine the vertices as agents in a parallel endeavour that proceeds in rounds and that is designed to yield information useful to construct $μ$. In each round, every vertex sends a message to each of its neighbours. Since two messages are sent between two adjacent vertices $u$ and $v$---one in each direction---it is convenient to distinguish the directed edges $(u,v)$ and $(v,u)$, the message from $u$ to $v$ being sent along $(u,v)$ and vice versa. Along $e = (u,v)$ the message is a number $I_e ∈ [0,\min(η(u),η(v))]$. We interpret this as the vertex $u$ suggesting that $μ(\{u,v\})$ be set to $I_e$. To determine $I_e$, the vertex $u$ sums up the messages it received from its \emph{other} neighbours in the previous round, obtaining a value $ξ$. If $ξ < η(u)$, then, assuming the suggestions of the neighbours of $u$ were all followed, $u$ would want $μ(e) = η(u)-ξ$ in order to fully utilise its weight $η(u)$. Taking into account the weight of $v$, $u$ sends $I_e = [η(u) - ξ]_0^{η(v)}$ where $[x]_i^j := \max(i,\min(j,x))$ is our shorthand for the “clamp function” (which we also occasionally use for one-sided clamping, leaving out the upper or lower index). Let $P$ be the operator that takes an assignment $I \colon \vec{E}(G_n) → ℕ₀$ of messages to directed edges and computes the messages $P(I) \colon \vec{E}(G_n) → ℕ₀$ of the next round.

On finite trees, iterated application of $P$ can easily be seen to converge to a unique fixed point $I^*$ of $P$, regardless of the initial assignment of messages. From $I^*$, the size of the maximum allocation can be obtained by local computations. It is plausible but non-trivial that the asymptotic behaviour of a largest allocation of $G_n$ is similarly connected to fixed points of $P$ on the random weak limit $T$ of $G_n$. The details are found in \cite{L:Belief_Propagation:2013}, but we say a few words trying to give some intuition.

Let $(u,v)$ be an edge of $T$ where $v$ is closer to the root than $u$ and let $T_u$ be the subtree of $T$ containing $v$, $u$ and all descendants of $u$. If we apply $P$ to $T$ repeatedly (starting with, say, the all-zero message assignment $I ≡ 0$), the message $I_{(u,v)}$ in later rounds will depend on ever larger parts of $T_u$ but nothing else. Assume there was a magical (measurable) function $f$ that finds, by looking at all of $T_u$, the message $f(T_u)$ that is sent along $(u,v)$ in some fixed point of $P$. In particular, if $u₁,…,u_i$ are the children of $u$, locally, the fixed point equation is \[\ f(T_u) = [η(u) - \sum_{j = 1}^i f(T_{u_j})]₀^{η(v)}. \]
Assume we have yet to reveal anything about $T_u$ and only know the types $t_u$ and $t_v$ of $u$ and $v$. Then the random variable $I_{t_u→t_v} := f(T_u)$ has a well-defined distribution. The four possible combinations of types yield random variables $\Idw,\Iow,\Iwd,\Iwo$, which must fulfil certain distributional equations.

Consider for instance $e = (u,v)$ with $u ∈ B$ and $v ∈ A_C$. On the one hand the message $f(T_u)$ is distributed like $\Iwd$. On the other hand, looking one layer deeper, $u$ has children $u₁,…,u_X$ of type $A_R$, with $X \sim \Po(ck)$ as well as one child $a' ≠ a$ of type $A_C$. The messages $f(T_{u₁}),…,f(T_{u_X})$ and $f(T_{a'})$ are independent (since the subtrees are independent) and distributed like $\Iow$ or $\Idw$, respectively, implying:
\begin{align*}
    \Iwd \stackrel{d}= \big[ℓ-\Idw - \sum_{j=1}^{X} \Iow^{(j)}\big]_0^{ℓ-1}.\tag{\hbox{$\star 3$}}
\end{align*}
where a superscript in parentheses indicates an independent copy of a random variable and “$\stackrel{d}{=}$” denotes equality in distribution.

Leconte, Lelarge, and Massoulié \cite{L:Belief_Propagation:2013} show that, remarkably, the solutions to a system of four of these equations are essentially all we require to capture the asymptotics of maximum allocations. Readers deterred by the measure theory may find Section 2 of \cite{L:A_New_Approach:2012} illuminating, which gives a high level description of the argument for a simpler case.

We now state the specialisation of the theorem that applies to our family $(G_n)_{n∈ℕ}$, which is a fairly straightforward matter with one twist: In \cite{L:Belief_Propagation:2013} allocations were restricted not only by vertex constraints (our $η\colon V(G_n) → ℕ$), but also by edge constraints giving an upper bound on $μ(e)$ for every edge $e$. We do not require them in this sense, and make all edge constraints large enough so as to never get in the way. We repurpose them \emph{for something else}, however, namely to tell apart the subtypes $A_C$ and $A_R$ within the vertex set $A$. This is because the distribution of the children of $u ∈ A$ depends on this distinction and while \cite{L:Belief_Propagation:2013} knows no subtypes of $A$ out of the box, the constraint on the edge to the parent may influence the child distribution.

For readers eager to verify the details using a copy of \cite{L:Belief_Propagation:2013}, we give the required substitutions. Let $C,R ≥ ℓ$ be two distinct constants, then the distributions $Φ_A$ and $Φ_B$ on weighted vertices with dangling weighted edges, as well as our notational substitutions are:
\begin{align*}
    Φ_A\colon &\begin{cases}
        \text{with probability $\frac{1}{1+c}$: vertex-weight $ℓ{-}1$ and $2$ edges with constraint $C$,}\\
        \text{with probability $\frac{c}{1+c}$: vertex-weight $1$ and $k$ edges with constraint $R$,}
    \end{cases}\\
    Φ_B\colon &\text{vertex-weight $ℓ$,\ \ $2$ edges with constraint $C$ and $\Po(kc)$ edges with constraint $R$},\\
    & X(C) = \Idw,\ \ X(R) = \Iow,\ \ Y(C) = \Iwd,\ \ Y(R) = \Iwo.
\end{align*}

\begin{lemma}[Special case of {\cite[Theorem 2.1]{L:Belief_Propagation:2013}}]
    \label{prop:appliedLelarge}
    \[ \lim_{n → ∞}\tfrac{M(G_{n,cn})}{n} = \inf\big\{ F(\Idw,\Iwd,\Iow,\Iwo,c) \big\}\quad \text{\AS, where} \]
    \vspace{-\baselineskip}
    \begin{gather*}
       F(\Idw,\Iwd,\Iow,\Iwo,c) = \E\bigg[
    [\Iwd^{(1)} + \Iwd^{(2)}]_0^{ℓ-1} + c · [\sum_{j=1}^k \Iwo^{(j)} ]_0^1\\
+ \big[ℓ-\sum_{i=1,2}[ℓ-\Idw^{(i)} - \sum_{j=1}^{X} \Iow^{(j)}]_0 - \sum_{i = 1}^{X} [ℓ-\Idw^{(1)}-\Idw^{(2)} - \sum_{j≠i} \Iow^{(j)}]₀\big]₀\bigg]
    \end{gather*}
    and the infimum is taken over distributions of $\Idw,\Iow,\Iwd,\Iwo$ fulfilling
    \begin{align*}
    \Idw &\stackrel{d}= ℓ-1-\Iwd,\tag{$\star 1$}\\
    \Iow &\stackrel{d}= \big[1-\sum_{j=1}^{k-1} \Iwo^{(j)}\big]_0,\tag{\hbox{$\star 2$}}\\
    \Iwd &\stackrel{d}= \big[ℓ-\Idw - \sum_{j=1}^{X} \Iow^{(j)}\big]_0^{ℓ-1},\tag{\hbox{$\star 3$}}\\
    \Iwo &\stackrel{d}= \big[ℓ-\Idw^{(1)}-\Idw^{(2)} - \sum_{j=1}^{X} \Iow^{(j)}\big]_0^1,\tag{\hbox{$\star 4$}}
    \end{align*}
    and where $X \sim \Po(kc)$ and superscripts in parentheses indicate independent copies.
\end{lemma}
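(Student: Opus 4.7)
The plan is to apply \textcite{L:Belief_Propagation:2013}'s Theorem~2.1 directly to the sequence of bipartite incidence graphs $(G_n)_{n\in\mathbb{N}}$ under the substitutions listed just before the statement. The work splits into three parts: verifying the input hypotheses of that theorem, performing the substitution to specialise its abstract message equations, and matching the resulting functional to $F$.

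First I would verify that $(G_n)$ fits the model of random bipartite weighted graphs considered in \cite{L:Belief_Propagation:2013}. On the $A$-side the vertices are an i.i.d.\ mixture: with probability $\tfrac{n}{n+cn}=\tfrac{1}{1+c}$ a vertex is of type $A_C$ (weight $\ell-1$, degree $2$) and with probability $\tfrac{c}{1+c}$ a vertex is of type $A_R$ (weight $1$, degree $k$); this is exactly $\Phi_A$. Every $B$-vertex deterministically has weight $\ell$ and two $A_C$-neighbours, plus a number of $A_R$-neighbours that is asymptotically $\mathrm{Po}(kc)$-distributed, which is exactly $\Phi_B$. The required uniform integrability holds because vertex weights are bounded by $\ell$ and the Poisson tail is light; local weak convergence $G_n\to T$ is \cref{prop:local-weak-convergence}. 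The two distinct constants $C,R\geq\ell$ enter only through edge constraints, but since $\mu(\{u,v\})\leq\min(\eta(u),\eta(v))\leq\ell$ on every edge, no such constraint is ever binding and $M(G_n)$ is unaffected. Their genuine role is as labels: when a $B$-vertex looks at an incident edge, the constraint reveals whether its other endpoint is of subtype $A_C$ or $A_R$, which is what \cite{L:Belief_Propagation:2013} needs in order to describe the subtype-dependent child distribution.

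Second I would specialise the fixed-point system. Lelarge's theorem furnishes distributional identities for two message variables $X,Y$ indexed by the edge constraint; substituting $X(C)=\Idw$, $X(R)=\Iow$, $Y(C)=\Iwd$, $Y(R)=\Iwo$ and reading off the recursion at each of the four possible (vertex-type, outgoing-edge-type) pairs yields the four equations $(\star 1)$--$(\star 4)$. For instance $(\star 3)$ is the message emitted by a $B$-vertex $u$ along an $A_C$-edge: $u$ has weight $\ell$, one other $A_C$-child contributing $\Idw$, and $X\sim\mathrm{Po}(kc)$ independent $A_R$-children each contributing $\Iow$; clamping into $[0,\ell-1]$ (the weight of the receiving $A_C$-neighbour) gives exactly the right-hand side. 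The remaining three equations are obtained the same way at $A_C$-, $A_R$-, and $B$-vertices respectively, using that an $A_C$-vertex has a single other neighbour and an $A_R$-vertex has $k-1$ others.

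Third, the functional $F$ comes out of Lelarge's formula for the limiting ratio as an expectation over the three vertex types, each term counting the weight utilised in a typical neighbourhood: the $A_C$-term is the clamp $[\Iwd^{(1)}+\Iwd^{(2)}]_0^{\ell-1}$, the $A_R$-term contributes $c\cdot[\sum_{j=1}^k \Iwo^{(j)}]_0^1$ (the factor $c$ accounting for the asymptotic ratio $|A_R|/n$), and the $B$-term is the saturation deficit at a typical $B$-vertex, whose inner clamped differences subtract the weight used on each of its $2+X$ neighbours. Taking the infimum over distributional fixed points of $(\star 1)$--$(\star 4)$ then yields the stated almost-sure limit. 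I expect the main obstacle to be exactly this third step: matching Lelarge's abstract expression (indexed by generic degree and edge-constraint distributions) to our concrete $F$, with careful accounting of normalisation by $|B|=n$ versus $|A|=(1+c)n$ and of the independent-copy conventions encoded by the superscripts $(j)$.
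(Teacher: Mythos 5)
Your proposal takes the same route as the paper: the paper does not give a detailed proof but explicitly declares the lemma a direct specialisation of Lelarge's Theorem~2.1 and supplies precisely the substitutions you use ($\Phi_A$, $\Phi_B$, the constraint labels $C,R\geq\ell$ repurposed to distinguish the $A$-subtypes, and the map $X(C)=\Idw$, $X(R)=\Iow$, $Y(C)=\Iwd$, $Y(R)=\Iwo$), together with \cref{prop:local-weak-convergence} as the local-weak-convergence input. Your three-step verification sketch, including the key observation that the edge constraints are never binding since $\mu(e)\leq\ell\leq C,R$ and serve only as type labels, matches the paper's intent, so there is nothing substantive to add.
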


To appreciate the usefulness of Lemma \ref{prop:appliedLelarge}, understanding its form is more important than understanding the significance of the individual terms.

\def\pr{\vec{ρ}}
\def\prt{\vec{ρ}_{\textrm{triv}}}

If $X$ is a random variable on a finite set $D$, then the distribution of $X$ is captured by real numbers $(\Pr[X = i])_{i ∈ D} ∈ [0,1]^{|D|}$ that sum to $1$. In this sense, the four distributions of $\Idw,\Iow,\Iwd,\Iwo$ are given by numbers $\pr ∈ [0,1]^{ℓ-1} × [0,1]² × [0,1]^{ℓ-1} × [0,1]² = [0,1]^{2ℓ+2}$. We say $\pr ∈ [0,1]^{2ℓ+2}$ is a \emph{solution} to the system $(\star)$ if the four groups of numbers belonging to the same distribution each sum to $1$ and if setting up the four random variables according to $\pr$ satisfies $(\star 1)$,$(\star 2)$,$(\star 3)$ and $(\star 4)$.

If we treat $c$ as a variable instead of as a constant, we obtain the \emph{relaxed system} $(\star c)$ where solutions are pairs $(\pr,c) ∈ [0,1]^{2ℓ+2} × (0,∞)$. The value $\prt$ that corresponds to
\[  1 = \Pr[\Idw = 0] = \Pr[\Iow = 0] = \Pr[\Iwd = ℓ-1] = \Pr[\Iwo = 1] \]
is easily checked to give rise to a solution $(\prt,c)$ of the relaxed system for any $c > 0$, we call such a solution \emph{trivial}. Evaluating $F$ for a trivial solution yields $ℓ-1+c$ so Lemma \ref{prop:appliedLelarge} implies the trivial assertion $\lim \frac{M(G_n)}{n} ≤ ℓ-1+c$ for all $c > 0$.

We now give a “nice” characterisation of the space of non-trivial solutions for $(\star c)$.

\begin{lemma}
    \label{lem:parameterise-solutions}
    For any $k,ℓ ≥ 2$, there is a bijective map $λ ↦ (\pr_λ, c_λ)$ from $(0,∞)$ to the set of non-trivial solutions for $(\star c)$.
    
    Moreover, (each component of) this map is an explicit real analytic function.
\end{lemma}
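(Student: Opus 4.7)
The plan is to exhaust the algebraic structure of $(\star)$ and show that non-trivial solutions form a one-parameter family, taking $\lambda$ to be the effective Poisson rate appearing in the equations. The first move is to eliminate redundant degrees of freedom: equation $(\star 1)$ reduces to $\Idw \stackrel{d}{=} \ell - 1 - \Iwd$, eliminating $\Idw$; equations $(\star 2)$ and $(\star 4)$ force $\Iow, \Iwo \in \{0,1\}$, with single free parameters $q := \Pr[\Iow = 1]$ and $p := \Pr[\Iwo = 1]$, linked by $q = (1-p)^{k-1}$ from $(\star 2)$; and by Poisson thinning the sum $S := \sum_{j=1}^X \Iow^{(j)}$ in $(\star 3)$ and $(\star 4)$ is $\Po(\mu)$-distributed with $\mu := kcq$. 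A parameter count then suggests a one-dimensional solution variety: the remaining unknowns are the $\ell-1$ free probabilities for $\Iwd$ together with $p$ and $c$, totaling $\ell + 1$ scalars, while the residual equations $(\star 3)$ and $(\star 4)$ provide $\ell$ independent constraints.

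I would then set $\lambda := \mu$ and solve the reduced system for each $\lambda \in (0,\infty)$. With $\Idw$ expressed via $\Iwd$, $(\star 3)$ becomes the self-consistent fixed-point equation
\begin{equation*}
\Iwd \stackrel{d}{=} [1 + \Iwd' - S]_0^{\ell-1},
\end{equation*}
for the distribution $\pr(\lambda)$ of $\Iwd$ on $\{0,\ldots,\ell-1\}$, with $\Iwd'$ an independent copy and $S \sim \Po(\lambda)$. The induced operator $\Phi_\lambda$ on the simplex is monotone with respect to stochastic order, so iterating from the pointmasses at $0$ and $\ell - 1$ produces a stochastically increasing and a stochastically decreasing sequence, converging to a minimal and a maximal fixed point, respectively. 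These coincide because $\Po(\lambda)$ charges every non-negative integer with positive probability and thereby strictly mixes any ordered pair of initial distributions. This establishes existence, uniqueness, and — via the implicit function theorem applied to the non-singular $I - D\Phi_\lambda$ at the fixed point — real-analyticity of $\pr(\lambda)$. From $\pr(\lambda)$, equation $(\star 4)$ reduces to the explicit analytic formula $p_\lambda = \Pr[\Iwd^{(1)} + \Iwd^{(2)} - S \geq \ell - 1]$, and then $q_\lambda = (1-p_\lambda)^{k-1}$ and $c_\lambda = \lambda / (k q_\lambda)$, all real-analytic in $\lambda$.

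Bijectivity is then immediate. Injectivity follows because the pair $(\pr_\lambda, c_\lambda)$ determines $p_\lambda, q_\lambda$ and hence $\lambda = k c_\lambda q_\lambda$. Surjectivity follows because any non-trivial solution $(\pr, c)$ of $(\star c)$ must have $\mu := kcq > 0$: otherwise $S = 0$ almost surely, $(\star 3)$ forces $\Iwd = \ell - 1$ almost surely, and one recovers $\prt$; the uniqueness from the second step then identifies the given solution with $(\pr_\mu, c_\mu)$. The main obstacle will be the uniqueness step for $\Phi_\lambda$: the monotone sandwich is clean conceptually, but verifying that the two iterate sequences actually coalesce for all $\ell \geq 3$ will need a careful mixing argument, plausibly by induction on the diameter of the stochastic-order interval combined with the fact that $S$ takes every non-negative integer value with positive probability.
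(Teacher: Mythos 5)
Your outline is on the right track and reaches the same endpoint as the paper, but it misses the one observation that makes the whole lemma almost trivial: the fixed-point equation you write for the law of $\Iwd$,
\[
\Iwd \;\stackrel{d}{=}\; \bigl[\,1 + \Iwd' - S\,\bigr]_0^{\ell-1}, \qquad \Iwd' \text{ an independent copy},\; S \sim \Po(\lambda),
\]
is \emph{linear} in the unknown distribution $\pr = (p_0,\dots,p_{\ell-1})$. Conditioning on $\Iwd' = j$ gives $\Pr\bigl[[1+\Iwd'-S]_0^{\ell-1}=i\bigr] = \sum_j p_j\,\Pr\bigl[[1+j-S]_0^{\ell-1}=i\bigr]$, i.e.\ $\Phi_\lambda$ is a column-stochastic matrix acting on the simplex — the transition matrix of a Markov chain on $\{0,\dots,\ell-1\}$. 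The paper exploits exactly this: the matrix is upper Hessenberg with $1$s on the subdiagonal, so the bottom $\ell-1$ rows determine $p_0,\dots,p_{\ell-2}$ by back-substitution as explicit polynomials in $\lambda$ and $e^{\pm\lambda}$ times $p_{\ell-1}$, and normalization then fixes $p_{\ell-1}$. Existence, uniqueness and real-analyticity all come out in one stroke. You instead treat the problem as a potentially nonlinear monotone fixed-point problem and propose a sandwich-plus-mixing argument, and you yourself flag this coalescence step as ``the main obstacle.'' It isn't an obstacle at all once you notice linearity: for $\lambda>0$, the chain is irreducible and aperiodic (it has a self-loop at $0$ and full Poisson support drives it everywhere), so Perron--Frobenius gives a unique stationary law. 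Your monotone framing is correct as far as it goes (the operator is monotone in stochastic order, and the simplex with that order is a complete lattice), but it is heavier machinery than needed and leaves you with exactly the verification you were worried about.

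There is also a concrete bug in your analyticity step. You propose to apply the implicit function theorem to $I - D\Phi_\lambda$ at the fixed point, but since $\Phi_\lambda$ is linear, $D\Phi_\lambda = \Phi_\lambda$ and $I - \Phi_\lambda$ is \emph{singular} on $\mathbb{R}^\ell$: its transpose annihilates the all-ones vector because the columns of $\Phi_\lambda$ sum to $1$. To make the IFT work you would have to restrict to the affine hyperplane $\{\sum_i p_i = 1\}$, where irreducibility of the chain makes the restricted operator nonsingular. The paper's route sidesteps this entirely: back-substitution plus normalization produces each $p_j(\lambda)$ as a manifestly analytic expression, and the only thing to check is that the normalizing constant and $q(\lambda)$ stay positive for $\lambda>0$. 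Your reduction of $(\star 1),(\star 2),(\star 4)$, the Poisson-thinning parametrization $\lambda = kcq$, the closed form $p_\lambda = \Pr[\Iwd^{(1)}+\Iwd^{(2)}-S \geq \ell-1]$, and the bijectivity argument at the end all match the paper and are fine.
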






\begin{proof}
Note that $Y := \sum_{j=1}^{X} \Iow^{(j)}$ is the sum of $X$ independent indicator random variables, where $X \sim \Po(kc)$ and $\Pr[\Iow^{(j)} = 1] = q$ for some $q ∈ [0,1]$ and all $1 ≤ j ≤ X$, with $q = 0$ only occurring in trivial solutions. It is well known that such a “thinned out” Poisson distribution is again Poisson distributed and we have $Y \sim \Po(λ)$ for $λ = kcq$. Thus, each non-trivial solution to $(\star c)$ has such a parameter $λ>0$. We will now show that, conversely, $λ$ uniquely determines this solution. From $(\star 1)$ and $(\star 3)$ we obtain:
\[ \Iwd \stackrel{d}= \big[ℓ-(ℓ-1-\Iwd) - Y\big]_0^{ℓ-1} = \big[\Iwd + 1 - Y\big]_0^{ℓ-1}. \]
With $p_i := \Pr[\Iwd = i]$ for $0 ≤ i ≤ ℓ-1$ we can write this equation in matrix form as

\begin{equation}
\label{eq:matrix}
\begin{pmatrix}
    p₀\\p₁\\\vdots\\p_{ℓ-1}
\end{pmatrix}
= e^{-λ} ·
\begin{pmatrix}
    * & * & * & … & *\\
    1 & λ & λ²/2 & … & \frac{λ^{ℓ-1}}{(ℓ-1)!}\\
    & \ddots & \ddots & & \vdots\\
    & & 1 & λ &λ²/2\\
    & & & 1 & λ+1\\
\end{pmatrix}\begin{pmatrix}
    p₀\\p₁\\\vdots\\p_{ℓ-1}
\end{pmatrix}
\end{equation}
This uses $\Pr[Y = j] = e^{-λ}λ^j/j!$ for $j ∈ ℕ₀$. Each “$*$” is such that the columns of the matrix sum to $(e^λ,…,e^λ)$, which is implicit in the fact that we deal with distributions. The unique solution for a fixed $λ$ can be obtained by using the equations from bottom to top to express $p_{ℓ-2},p_{ℓ-3},…,p₀$ in terms of $p_{ℓ-1}$ and then choosing $p_{ℓ-1}$ such that the probabilities sum to $1$.
This yields a closed expression $p_j = p_j(λ)$ for $0 ≤ i ≤ ℓ-1$.


Using first $(\star 1)$, then $(\star 4)$ (with the definition of $Y$) and finally $(\star 2)$ the distributions of $\Idw$, $\Iwo$ and $\Iow$ fall into place, completing the unique solution candidate $\pr_{λ}$. The only loose end is the definition of $λ$, which gives a final equation:
If $q(λ) = \Pr[\Iow = 1]$ is the value we computed after choosing $λ$, we need $λ = kc·q(λ)$, which uniquely determines a value $c_{λ} = \frac{λ}{k·q(λ)}$ (it is easy to check that $λ > 0$ guarantees $q(λ) >0$). Thus $(\pr_{λ},c_{λ})$ is the unique solution with parameter $λ$. Retracing our steps it is easy to verify that we only composed real analytic functions.
\end{proof}
With the parametrisation of the solutions of $(\star c)$, we can, with a slight stretch of notation, rewrite Lemma \ref{prop:appliedLelarge}. For any $c > 0$ we have
\begin{equation}
    \lim_{n → ∞}\tfrac{M(G_{n,cn})}{n} = \inf(\{ F(\pr_{λ},c_{λ}) \mid λ ∈ (0,∞), c_λ = c\} ∪ \{ℓ-1+c\})\quad \text{\AS}. \label{eq:appliedLelarge-improved}
\end{equation}
We now define the value $γ_{k,ℓ}$ and by proving \cref{prop:threshold-up-to-on} demonstrate its significance.
\begin{equation}
    γ_{k,ℓ} := \inf_{λ > 0}\{c_{λ} \mid F(\pr_{λ},c_{λ}) < ℓ-1+c_{λ} \}.\label{eq-def:threshold}
\end{equation}


\begin{proof}[Proof of \cref{prop:threshold-up-to-on}]
    \begin{description}
        \item[\textrm{Case} $\bm{c < γ_{k,ℓ}}$.] By definition of $γ_{k,ℓ}$ there is no parameter $λ$ with $c_{λ} = c$ and $F(\pr_{λ},c_{λ}) < ℓ-1+c$. Thus, \cref{eq:appliedLelarge-improved} implies $\lim_{n → ∞}\frac 1n {M(G_{n,cn})} = ℓ-1+c$ \AS.
        \item[\textrm{Case} $\bm{c > γ_{k,ℓ}}$.]
            By definition of $γ_{k,ℓ}$, for $c = γ_{k,ℓ} + ε$ there is some $λ$ with $c_{λ} ∈ [γ_{k,ℓ},c)$ and $F(\pr_{λ},c_{λ}) ≤ ℓ-1+c_{λ}-ε'$ for some $ε' > 0$. This implies $\lim_{n→∞}\frac 1n {M(G_{n,c_{λ}n})} ≤ ℓ-1+c_{λ}-ε'$ \AS. 
    
            Since $G_{n,cn}$ can be obtained from $G_{n,c_λn}$ by adding $(c-c_{λ})n$ vertices of weight $1$ with random connections, and this can increase the size of a maximum allocation by at most $(c-c_{λ})n$, we also have $\lim_{n→∞}\frac 1n {M(G_{n,cn})} ≤ ℓ-1+c-ε'$ \AS.\qedhere
    \end{description}
\end{proof}


\section{Closing the Gap – Proof of the Main Theorem}
\label{sec:mainproof}

The key ingredient still missing to prove the main theorem is Lemma \ref{lem:no-small-hall-wittnesses}, stated on page \pageref{lem:no-small-hall-wittnesses}.
\begin{proof}[Proof of Lemma \ref{lem:no-small-hall-wittnesses}]
    Call a set $X ⊂ ℤ_n = V(\hut W_n)$ a \emph{bad set} if it induces (hyper-)edges of total weight $ℓ|X|$ or more. We now consider all possible sizes $t$ of $X$ and each possible number $αt$ ($0 < α ≤ 1$) of contiguous segments of $X$ separately, using the first moment method to bound the probability that a bad set $X$ with such parameters exists, later summing over all $t$ and $α$.
    
    For now, let $t$ and $α$ be fixed and write $X$ as the union of non-empty, non-touching\footnote{Two intervals touch if their union is an interval, i.e.\ if there is no gap in between them.} intervals $X = X₁ ∪ X₂ ∪ … ∪ X_{αt}$ arranged on the cycle $ℤ_n$ in canonical ordering and with $X₁$ being the interval containing $\min X$. We write the complement $ℤ_n - X = Y₁ ∪ Y₂ ∪ … ∪ Y_{αt}$ in a similar way. It is almost possible to reconstruct $X$ from the sets $\{ x₁,…,x_{αt}\}$ and $\{y₁,…,y_{αt}\}$ where $x_i := |X₁ ∪ … ∪ X_i|$ and $y_i := |Y₁ ∪ … ∪ Y_i|$, we just do not know where $X₁$ starts. To fix this, we exploit that $x_{αt}$ and $y_{αt}$ are always $t$ and $n-t$, respectively, and do not really encode information. In the case $0 ∈ X$, we set $x'_{αt} := \max X₁ ∈ [0,…,x₁)$ and $y'_{αt} := y_{αt}$; if $0 ∉ X$ we set $y'_{αt} := \max Y₁ ∈ [0,…,y₁)$ and $x'_{αt} := x_{αt}$. The sets $\{x₁,…,x_{αt-1},x'_{αt}\} ⊆ \{0,…,t\}$ and $\{y₁,…,y_{αt-1},y'_{αt}\} ⊆ \{0,…,n-t\}$ now uniquely identify $X$, meaning there are at most $\binom{t+1}{αt}\binom{n-t+1}{αt}$ choices for $X$. No matter the choice, $X$ induces helper edges of weight precisely $(ℓ-1)·(t-αt)$, since for each $x ∈ X$ the edge $(x,x+1)$ is induced, except if $x$ is the right endpoint of one of the $αt$ intervals. In order for $X$ to induce a total weight of $ℓt$ or more another
    \[ ℓt - (ℓ-1)(t-αt) = t + (ℓ-1)αt ≥ t + αt \]
    ordinary edges (of weight $1$) need to be induced. There are $\binom{cn}{t+αt}$ ways to choose such a set of edges and each edge has all endpoints in $X$ with probability $(\frac{t}{n})^k ≤ (\frac{t}{n})²$. 
    
    Together we obtain the following upper bound on the probability that a bad set of size $t$ with $αt$ contiguous regions exists:
    \begin{align*}
        &\underbrace{\binom{t+1}{αt}\binom{n-t+1}{αt}}_{\text{choices for $X$}}\underbrace{\binom{cn}{t+αt}}_{\text{choices for edges}}\left(\frac{t}{n}\right)^{2(t+αt)}
        ≤2^{t+1}\binom{n}{αt}\binom{n}{t+αt}\left(\frac{t}{n}\right)^{2(t+αt)}\\
        &≤2^{t+1}\left(\frac{ne}{αt}\right)^{αt}\left(\frac{ne}{t+αt}\right)^{t+αt}\left(\frac{t}{n}\right)^{2(t+αt)}
        ≤2\left(2 \left(\frac{ne}{αt}\right)^{α} \left(\frac{ne}{t+αt}\right)^{1+α} \left(\frac{t}{n}\right)^{2(1+α)} \right)^t\\
        &≤2\bigg(\underbrace{2\left(\frac{e}{α}\right)^{α} \left(\frac{e}{1+α}\right)^{1+α}}_{≤ C \text{ for some $C = O(1)$}} \left(\frac{n}{t}\right)^{α} \left(\frac{n}{t}\right)^{1+α} \left(\frac{t}{n}\right)^{2(1+α)} \bigg)^t
        ≤2\left(C \left(\frac{t}{n}\right) \right)^t.
    \end{align*}
    We used that $α^α$ is bounded (has limit 1) for $α → 0$. The resulting term is $o(1)$ for $t = 1,2,3$ (and only $1,2$ or $3$ choices for $α$ are possible). For $4 ≤ t ≤ \sqrt{n}$, it is  
    \[ 2\left(C \left(\frac{t}{n}\right) \right)^t ≤ 2\left(\frac{C}{\sqrt{n}}\right)^t ≤  2\left(\frac{C}{\sqrt{n}}\right)^4 = O(n^{-2}).\]
    Summing over all combinations of $O(\sqrt{n}·\sqrt{n})$ choices for $α$ and $t$,  we get a sum of $O(n^{-1})$.
    For $\sqrt{n} ≤ t ≤ δn$ with $δ = \frac{1}{2C}$ we get
    \[ 2\left(C \left(\frac{t}{n}\right) \right)^t ≤ 2\left(\frac 12\right)^t ≤ 2·2^{-\sqrt{n}},\]
    which is clearly $o(1)$ even if we sum over all $O(n²)$ combinations for choosing $t$ and $α$.
\end{proof}

\begin{proof}[Proof of the Main Theorem]
    It suffices to show that $γ_{k,ℓ}$ is the threshold for the event $\{M(G_n) = n(ℓ-1+c)\}$ since by \cref{prop:eliminate-cycles,prop:allocations-vs-orientations} this event coincides with the events that $\hat{W}_n$ and $W_n$ are orientable.
    
    \begin{description}
        •[$\bm{c > γ_{k,ℓ}}$.] If $c = γ_{k,ℓ} + ε$ for $ε > 0$, then by \cref{prop:threshold-up-to-on} we have $\lim_{n→∞}\frac{M(G_{n,cn})}{n} = ℓ-1+c-ε'$ \AS for some $ε' >0$. This clearly implies that $M(G_{n,cn}) < n(ℓ-1+c)$ \whp.
        •[$\bm{c < γ_{k,ℓ}}$.] Let $c = γ_{k,ℓ} - ε$ for some $ε > 0$ and define $c' := γ_{k,ℓ}-\frac{ε}2$. We generate $G_{n,cn}$ from $G_{n,c'n}$ by removing $\frac{ε}2n$ vertices from $A_R$. The idea is to derive orientability of $G_{n,cn}$ from Lemma \ref{lem:no-small-hall-wittnesses} and “almost-orientability” of $G_{n,c'n}$.
        
        More precisely, let $G^{(0)} := G_{n,c'n} = (A = A_C ∪ A_R, B, \text{“$\ni$”})$ and let $G^{(i+1)}$ be obtained from $G^{(i)}$ by removing a vertex from $A_R ∩ V(G^{(i)})$ uniformly at random, $0 ≤ i < \frac{ε}2n$.
        By \cref{prop:threshold-up-to-on} we have $\lim_{n→∞}\frac{M(G_{n,c'n})}{n} = ℓ-1+c'$ \AS, which implies $M(G_{n,c'n}) = n(ℓ-1+c')-o(n) = η(A) - o(n)$ \whp. For any subgraph $G$ of $G^{(0}$, define $\gap(G) := η(A ∩ V(G)) - M(G)$ which measures how far $G$ is away from being orientable.
        This ensures $\gap(G^{(0)}) = o(n)$ \whp as well as $\gap(G^{(i+1)}) ∈ \{\gap(G^{(i)}), \gap(G^{(i)}) - 1\}$. We say a vertex $a ∈ V(G^{(i)}) ∩ A_R$ is \emph{good} for $G^{(i)}$ if $\gap(G^{(i)}-a) = \gap(G^{(i)})-1$.
        
        Assume $\gap(G^{(i)}) > 0$. We now show that $Θ(n)$ vertices are good for $G^{(i)}$ \whp. Let $a ∈ A$ be one vertex of $G^{(i)}$ that is not saturated in a maximum allocation $μ$ of $G^{(i)}$. Let $X ⊆ B$ and $Y ⊆ A$ be the vertices from $A$ and $B$ reachable from $a$ via an \emph{alternating path}, i.e.\ a path of the form $(a = a₁,b₁,a₂,b₂,…)$ such that $μ(b_j,a_{j+1}) > 0$ for all $j$.
        
        It is easy to check that all vertices from $X$ are saturated in $μ$ (otherwise $μ$ could be increased), $Y$ exceeds $X$ in total weight and every vertex from $Y ∩ A_R$ is good for $G^{(i)}$. Moreover, when viewed as a subset of $V(\hat{W}_{n,c'n})$, $X$ induces at least $Y$. Discounting the low probability event that Lemma \ref{lem:no-small-hall-wittnesses} does not apply to $\hat{W}_{n,c'n}$, we conclude $|X| > δn$, and taking into account $|Y ∩ A_C| ≤ |X|$ (clear from definition of $A_C$) together with $(ℓ-1)|Y ∩ A_C|+|Y ∩ A_R| = η(Y) > η(X) = ℓ|X|$ we obtain $|Y ∩ A_R| > δn$.

        This means that \whp on the way from $G^{(0)}$ to $G^{(\frac{ε}2n)}$, we start with a gap of $o(n)$ and have up to $\frac{ε}2n = Ω(n)$ chances to reduce a non-zero gap by $1$, namely by choosing a good vertex for removal, and the probability is at least $δ = Ω(1)$ every time. Now simple Chernoff bounds imply that the gap vanishes \whp meaning $G^{(\frac{ε}2n)}$ is orientable $\whp$. Since the distributions of $G_{n,cn}$ and $G^{(\frac{ε}2n)}$ coincide, we are done.\qedhere
    %
    \end{description}
\end{proof}


\section{Numerical approximations of the Thresholds.} 
\label{sec:approximations}
Rewriting the definition of $γ_{k,ℓ}$ in \cref{eq-def:threshold} in the form promised in the introduction we get
\[ γ_{k,ℓ} = \inf_{λ > 0}\{f_{k,ℓ}(λ) \mid g_{k,ℓ}(λ) < 0 \}\]
where $f_{k,ℓ}(λ) = c_λ$ and $g_{k,ℓ}(λ) = F(\pr_{λ},c_{λ}) - ℓ + 1 - c_{λ}$. We give the plots of $f_{2,2}$ and $g_{2,2}$ in \cref{fig:plots-of-f-and-g}.

\begin{figure}[hbtp]
    \includegraphics[scale=0.7]{data/plotk2l2.pdf}
    \Description{The plot of $g_{2,2}(λ)$ starts at $0$ for $λ = 0$, slowly increases to a maximum, then dips below zero at around $λ = 1.74$. The plot of $f_{2,2}(λ)$ looks convex with a minimum where $g_{2,2}(λ)$ has its maximum. The value at $λ = 1.74$ is highlighted and roughly $f_{2,2}(λ) ≈ 0.965$.}
    \caption{The functions \textcolor{red}{$f_{2,2}(λ)$} and \textcolor{blue}{$g_{2,2}(λ)$}, with $γ_{2,2} = \inf_{λ > 0}\{f_{2,2}(λ) \mid g_{2,2}(λ) < 0 \}$.}
    \label{fig:plots-of-f-and-g}
\end{figure}

It looks as though $\{λ \mid g_{2,2}(λ) < 0\}$ is an interval $(λ^*\,{≈}\,1.74, ∞)$ on which $f_{2,2}$ is monotonically increasing, meaning $γ_{2,2} = f_{2,2}(λ^*) ≈ 0.965$. Here is a semi-rigorous argument that properly done plots cannot be misleading: Regardless of $k,ℓ≥2$ it is fairly easy to see that $c_λ = Ω(λ^{-(k-1)ℓ+1})$ for $λ→0$ and $c_λ = Ω(λ)$ for $λ → ∞$. In particular, for fixed $k,ℓ$ it is easy to obtain bounds $0 < λ₀ < λ₁ < ∞$ such that for $λ ∈ (0,λ₀) ∪ (λ₁,∞)$ we can guarantee $f_{k,ℓ}(λ) > 1$. Because $γ_{k,ℓ} < 1$, only the interval $[λ₀,λ₁]$ can be relevant. Being real analytic, the functions $f_{k,ℓ}$ and $g_{k,ℓ}$ have bounded first and second derivatives on $[λ₀,λ₁]$ which vindicates plots of sufficient resolution: There cannot be unexpected zeroes in between sampled positions and what looks strictly monotonic in the plot actually is. So starting with a golden ratio search close to the apparent root of $g_{2,2}$ we are guaranteed to find $λ^*$ and $γ_{2,2} = f_{2,2}(λ^*)$. This can be made formal.

While handling it this way saves us the trouble of having to deal with unwieldy functions, our lack of analytical insight means we have to consider each pair $(k,ℓ)$ separately to make sure that $f_{k,ℓ}$ and $g_{k,ℓ}$ do not exhibit a qualitatively different behaviour. We did this for $ℓ ≤ 4$ and $k ≤ 7$, i.e.\ for the values we provided in table \cref{table:values}.


\section{Speed of Convergence and Practical Table Sizes}
\label{sec:exp-speed-of-convergence}

\begin{figure}[htbp]
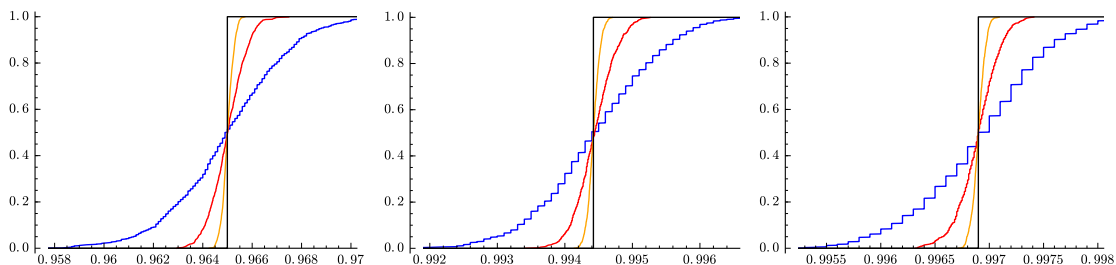

    \includegraphics[width=0.33\textwidth]{data/fk2l2.pdf}\includegraphics[width=0.33\textwidth]{data/fk2l3.pdf}\includegraphics[width=0.33\textwidth]{data/fk3l2.pdf}
    \Description{Three plots of three colour-coded functions each. In each case, the functions get steeper but seem assume the value $1/2$ at roughly the same point. The function for $n = 10^4$ is still visibly discrete, the others look smooth.}
    \caption{Approximate probabilities for $\hat{W}_{n,cn}^{k,ℓ}$ to not be orientable depending on $c$. From left to right, the plots correspond to $(k,ℓ) ∈ \{(2,2), (2,3), (3,2)\}$. In each plot the three curves correspond to $n ∈ \{ \textcolor{blue}{10^4}, \textcolor{red}{10^5}, \textcolor{orange}{10^6} \}$, with curves for larger $n$ visibly getting closer to the step function that we know is the limit for $n → ∞$. Details are given in the text.}
    \label{fig:speedOfConvergence}
\end{figure}

\noindent It is natural to wonder to what degree the asymptotic results of this paper predict the behaviour of hash tables for realistic (large-ish but fixed) values of $n$, say a hash table of size $n = 10^5$. Formally, for $k,ℓ ≥ 2$ and $t ∈ [0,1]$ we define the functions
\[ p_n^{k,ℓ}(c) := \Pr[\hat{W}_{n,cn}^{k,ℓ} \text{ is \emph{not} orientable}] \quad \text { and } \quad \mathrm{step}_{t}(c) = \begin{cases}
    0 & \text{ if $c < t$}\\
    \frac 12 & \text{ if $c = t$}\\
    1 & \text{ if $c > t$}
\end{cases}\]
This paper shows that $p_n^{k,ℓ} \colon [0,1] → [0,1]$ converges point-wise to $\mathrm{step}_{γ_{k,ℓ}}$, except, possibly, at the point of the threshold $c = γ_{k,ℓ}$ itself. To give an idea of the speed of this convergence we plotted approximations of $p_n^{k,ℓ}$ for $n ∈ \{10^4,10^5,10^6\}$ and $(k,ℓ) ∈ \{(2,2),(2,3),(3,2)\}$ in \cref{fig:speedOfConvergence}.

To obtain the approximation of $p_n^{k,ℓ}$, we carried out $1000$ trials. In each trial, a graph of the form $\hat{W}_{n,m}^{k,ℓ}$ was generated by adding random\footnote{Using Pseudo-Random numbers produced by the \textsc{mt19937} implementation of the \textsc{c++} standard library.} edges one by one until the graph was not orientable any more. The number of edges $m$ where orientability breaks down corresponds to the load $c = \frac{m}{n}$. As an estimate for $p_n^{k,ℓ}(c)$ we take the fraction of the $1000$ trials where orientability broke at a value less than $c$.

The plots suggest that, at the threshold, $p_n^{k,ℓ}$ assumes a value of $\frac 12$ and has a slope proportional to $\sqrt{n}$.

\section{Linear Time Construction of an Orientation?}
\label{sec:khosla-lsa}

Constructing a placement of objects in a hash table is not the focus of this paper, but we provide experimental support for the following conjecture:

\begin{conjecture}
    \label{conj:khosla-works}
    Let $k ≥ 2$ and $ℓ ≥ 2$ and $c = γ_{k,ℓ} - ε$ for some $ε > 0$. An orientation of $\hat{W}^{k,ℓ}_{n,cn}$ or, equivalently, a placement of $cn$ objects into $n$ table cells for $k$-ary cuckoo hashing with windows of size $ℓ$ can be constructed in time $O(C(ε)n)$ \whp using the adapted LSA Algorithm explained below. Here, $C(ε)$ does not depend on $n$.
\end{conjecture}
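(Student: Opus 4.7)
The plan is to analyze the adapted LSA algorithm on the bipartite allocation graph $G_n$ from \cref{def:G_n} and lift it to the infinite limit tree $T$ via the local weak convergence of \cref{prop:local-weak-convergence}. In the allocation picture, LSA assigns each $B$-vertex a non-decreasing integer label tracking the minimum weight at which it becomes unavailable; inserting an object is equivalent to increasing the demand at some $A_R$-vertex $a$ by $1$ and routing the extra unit along a minimum-label augmenting path from $a$ to a still-slack $B$-vertex, then incrementing the labels along the path. The total work of the whole construction is proportional to the sum of these path lengths, so it suffices to bound the expected length of one path by some finite $C(\varepsilon)$ uniformly in the progress of the algorithm and then concentrate.

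First I would check that LSA, on any finite $G_n$ admitting a maximum allocation, indeed reaches one; combined with \cref{lem:no-small-hall-wittnesses} and the main theorem this yields correctness \whp for $c = \gamma_{k,\ell} - \varepsilon$. The core analytic step is the bound on expected augmenting-path length, for which I would run the analogous procedure on the random tree $T$. The belief-propagation messages $\Idw,\Iow,\Iwd,\Iwo$ at a fixed point $\pr_\lambda$ (\cref{lem:parameterise-solutions}) describe precisely the local probability that a $B$-vertex is saturated and that the search must continue through one of its neighbors. Below threshold, the identity $\lim M(G_n)/n = \ell - 1 + c$ (\cref{prop:threshold-up-to-on}) encodes an abundance of slack in $G_n$, which on the tree translates into subcriticality of the branching process that counts the nodes explored during one augmenting-path search; its expected total progeny is finite and defines $C(\varepsilon)$. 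Pulling the bound back to $G_n$ uses that an augmenting path of length $L$ only inspects an $L$-neighborhood, and by local weak convergence such neighborhoods are isomorphic to those of $T$ with probability $1 - o(1)$. Subexponential tails for the subcritical progeny distribution, combined with a union bound over the $cn$ insertions, upgrade the expectation bound to the claimed $O(C(\varepsilon)\, n)$ \whp bound.

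The main obstacle is the mismatch between the \emph{static} fixed-point description and the \emph{dynamic} labels produced by LSA as it grows the load from $0$ to $c$. One needs a uniform-in-time control on the label distribution, for instance a stochastic domination saying that the labels at intermediate load $c' \le c$ are no larger than those of the static fixed point at load $c$. Such monotonicity is plausible because LSA labels are non-decreasing in time and non-decreasing in load, but a clean proof is not routine and is exactly the step that separates the experimental evidence in \cref{sec:khosla-lsa} from a full theorem. A windows-specific wrinkle is that the helper edges $C_n$ couple the cells $\mathbb{Z}_n$ densely along the cycle, so an augmenting path can in principle drift through many consecutive helper edges before resolving; ruling this out calls for a quantitative analysis of the equations $(\star 1)$ and $(\star 3)$ at the fixed point $\pr_\lambda$, and a polynomial-in-$\varepsilon$ spectral gap for the linearized belief-propagation operator at $\pr_\lambda$ would deliver both the domination and the correct $\varepsilon$-dependence of $C(\varepsilon)$.
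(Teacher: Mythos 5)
The statement you are trying to prove is explicitly labelled a \emph{conjecture} in the paper, and the paper offers no proof of it at all: \cref{sec:khosla-lsa} provides only experimental evidence (\cref{fig:placement-times}) and remarks ``We don't attempt a proof here, although we expect it to be possible with Khosla's techniques.'' So there is no paper proof to compare against; you are being asked to judge a proof sketch for an open problem.

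Your proposal is a reasonable roadmap, and it correctly identifies the crux: the mismatch between the \emph{static} belief-propagation fixed point $\pr_\lambda$ (which characterises a maximum allocation at the final load) and the \emph{dynamic} sequence of label configurations that LSA produces as the load grows from $0$ to $c$. That is indeed the step that turns experimental evidence into a theorem, and you are right that some form of stochastic domination or uniform-in-time control on labels is the missing ingredient. However, as written, the sketch has several concrete gaps. First, the claimed translation ``BP fixed point $\Rightarrow$ subcritical branching process for augmenting-path exploration'' is asserted, not derived; the BP messages quantify marginal saturation probabilities in a \emph{maximum} allocation of the limit tree, whereas an LSA augmenting path explores a neighbourhood whose saturation pattern is determined by the greedy history, not by the optimum --- you would need a coupling argument relating the two. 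Second, local weak convergence gives control of $O(1)$-balls in probability, but your union bound over $cn$ insertions needs uniform control over $\Theta(n)$ exploration events, and the balls explored by different insertions overlap and are not independent; passing from expectation to a \whp{} bound needs more than subexponential tails of a single progeny distribution. Third, the helper-edge issue you flag is real and not obviously subsumed by a ``spectral gap'' heuristic: in $\hat W_n$ a path can travel arbitrarily far along the deterministic cycle $C_n$ even when the $A_R$-part is sparse, and ruling this out requires a quantitative bound specific to the weights $\ell$ and $\ell-1$, not just subcriticality of the BP recursion. In short, your outline is consistent with what a proof would likely look like, but each of these three steps would need to become a lemma, and it is precisely their absence that keeps the statement a conjecture in the paper.
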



Algorithms to orient certain random graphs have been known for a while, for instance the \emph{selfless algorithm} analysed by Cain, Sanders and Wormald \cite{CSW:The_Random:2007} or an algorithm by Fernholz and Ramachandran \cite{FR:The_k-orientability:2007} involving so-called \emph{excess degree reduction}. While a generalisation of the selfless algorithm has been suggested by Dietzfelbinger et al.~\cite{DGMMPR:Tight:2010}, the algorithm that seemed easiest to adapt to our particular hypergraph setting is the Local Search Allocation (LSA) algorithm by Khosla \cite{Khosla:Balls-Into-Bins:2013} (with improved analysis in \cite{Khosla:Balls-In-Bins-Extended:2019}).

\begin{figure}[hbtp]
    \includegraphics[scale=0.65]{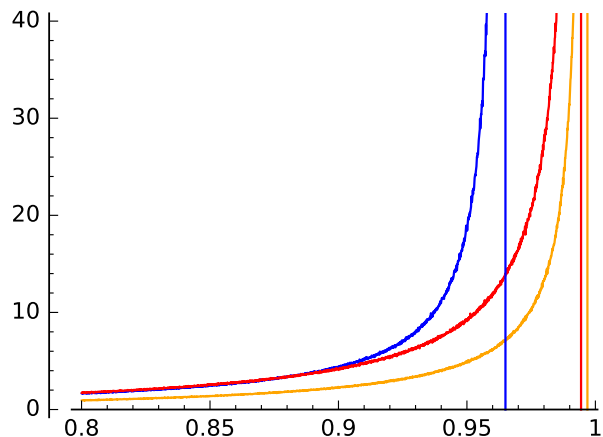}
    \Description{Three curves plotted on the interval $[0.8,1]$. They are fairly smooth but some noise is visible. They start with values around $2$ or so at $0.8$. Each curve bends upwards and leaves the image (assumes a value exceeding $40$) roughly $0.01$ or so left of a correspondingly coloured vertical line.}
    \caption{Average number of balls touched (one plus the number of evictions) to insert a new ball at a certain load using Khosla's LSA Algorithm (each point in the plot being the average of $10000$ insertions). The blue, red and orange curves correspond to $k$-ary cuckoo hashing with windows of size $ℓ$ for $(k,ℓ) ∈ \{\textcolor{blue}{(2,2)},\textcolor{red}{(2,3)},\textcolor{orange}{(3,2)}\}$, respectively, and a table of size $n = 10^7$. The corresponding thresholds are shown as vertical lines. Analogous plots for $n = 10^6$ are visually indistinguishable, suggesting the plots are stable as $n$ varies.}
    \label{fig:placement-times}
\end{figure}

\paragraph{The Algorithm.} Following Khosla's terminology, we describe the task of orienting $\hat{W}^{k,ℓ}_{n,cn}$ as a problem of placing balls into bins. There are $n$ bins of capacity $ℓ$ arranged in a circle and for each pair of adjacent bins there are $(ℓ-1)$ helper balls that may be placed into one of those two bins. Moreover there are ordinary balls, each of which has $k$ random bins it may be placed into.
 
 We start with all helper balls placed in their “left” option, in particular all bins have room for just one more ball. Now the ordinary balls are placed one by one. We maintain a label for each bin. All labels are natural numbers, initially zero. We place a ball simply by putting it in the admissible bin with least label. If placing a ball results in an overloaded bin $b$, one ball must be evicted from $b$ and inserted again. The ball to be evicted is chosen to have, among all balls in $b$, an alternative bin of least label.
 
 Whenever the content of a bin $b$ changed (after insertion or insertion + eviction), its label is updated. The new label is one more than the least label of a bin that is the alternative bin of a ball currently placed in $b$.
 
\paragraph{Analysis.} Labels can be thought of as lower bounds on the distance of a bin to the closest non-full bin in the directed graph where bins are vertices and an edge from $b₁$ to $b₂$ indicates that a ball in $b₁$ has $b₂$ as an alternative bin. It is fairly easy to see that this algorithm finds a placement in quadratic time whenever a placement exists. To show that running time is linear \whp, it suffices to show that the running time is linear in the sum of all labels in the end and that the sum of the aforementioned distances is linear \whp. We don't attempt a proof here, although we expect it to be possible with Khosla's techniques.

\paragraph{Experiments.} The results from \cref{fig:placement-times} suggest that the expected number of evictions per insertion is bounded by a constant as long as the load $c$ is bounded away from the threshold. Eviction counts sharply increase close to the threshold.


\section{Conclusion and Outlook}

We established a method to determine load thresholds $γ_{k,ℓ}$ for $k$-ary cuckoo hashing with (unaligned) windows of size $ℓ$. In particular, we resolved the cases with $k = 2$ left open in \cite{DW07:Balanced:2007,LP:3.5-Way:2009}, confirming corresponding experimental results by rigorous analysis. 

The following four questions may be worthwhile starting points for further research.

\paragraph{Is there more in this method?} It is conceivable that there is an insightful simplification of Lemma \ref{prop:appliedLelarge} that yields a less unwieldy characterisation of $γ_{k,ℓ}$. We also suspect that the threshold for the appearance of the $(ℓ+1)$-\emph{core} of $\hut W_n$ can be identified with some additional work (for cores see e.g. \cite{Molloy05:Cores-in-random-hypergraphs,Luczak:A-simple-solution}). This threshold is of interest because it is the point where the simple peeling algorithm to compute an orientation of $\hut W_n$ breaks down.


\paragraph{Can we prove efficient insertion?} Given our experiments concerning the performance of Khosla's LSA algorithm for inserting elements in our hashing scheme, it seems likely that its running time is linear, see \cref{conj:khosla-works} in \cref{sec:khosla-lsa}. But one could also consider approaches that do not insert elements one by one but build a hash table of load $c = γ_{k,ℓ} - ε$ given all elements at once. Something in the spirit of the selfless algorithm \cite{CSW:The_Random:2007} or excess degree reduction \cite{DGMMPR:Tight:2010} may offer linear running time with no performance degradation as $ε$ gets smaller, at least for $k = 2$.



\paragraph{How good is it in practice?} This paper does not address the competitiveness of our hashing scheme in realistic practical settings. The fact that windows give higher thresholds than (aligned) blocks for the \emph{same} parameter $ℓ$ may just mean that the “best” $ℓ$ for a particular use case is lower, not precluding the possibility that the associated performance benefit is outweighed by other effects. \cite{DW07:Balanced:2007} provide a few experiments in their appendix suggesting slight advantages for windows in the case of unsuccessful searches and slight disadvantages for successful searches and insert operations, in one very particular setup with $k = 2$. Further research could take into account precise knowledge of cache effects on modern machines, possibly using a mixed approach, respecting alignment only insofar as it is favoured by the caches. Ideas from Porat and Shalem \cite{PS:A_Cuckoo_Hashing:2012} could prove beneficial in this regard.

\paragraph{What about other geometries?} We analysed linear hash tables where objects are assigned random intervals. One could also consider a square hash table $(ℤ_{\sqrt{n}})^2$ where objects are assigned random squares of size $ℓ × ℓ$ (with no alignment requirement). We suspect that understanding the thresholds in such cases would require completely new techniques.





\begin{acks}
I am indebted to my advisor Martin Dietzfelbinger for drawing my attention to this problem as well as for providing a constant stream of useful literature recommendations. When discussing a preliminary version of this work at the Dagstuhl Seminar 17181 on Theory and Applications of Hashing, Michael Mitzenmacher and Konstantinos Panagiotou provided useful comments. The full version of this paper also profited from helpful reviewers who, among other things, pointed out the need for the discussion that is now \cref{sec:approximations,sec:exp-speed-of-convergence,sec:khosla-lsa} as well as more details in \cref{sec:local-weak-convergence}.
\end{acks}

    \bibliography{bibliographie}

\end{document}